\documentclass[11pt]{article}

\usepackage[margin=1in]{geometry}
\usepackage{amsmath,amssymb,amsthm,mathtools,bm}
\usepackage{booktabs}
\usepackage[numbers,sort&compress]{natbib}
\usepackage[hidelinks,breaklinks=true]{hyperref}

\newtheorem{theorem}{Theorem}
\newtheorem{proposition}{Proposition}
\newtheorem{corollary}{Corollary}
\newtheorem{lemma}{Lemma}
\theoremstyle{definition}

\newtheorem{assumption}{Assumption}
\theoremstyle{remark}
\newtheorem{remark}{Remark}

\newcommand{\calT}{\mathcal{T}}
\newcommand{\calR}{\mathcal{R}}
\newcommand{\calM}{\mathcal{M}}
\newcommand{\calP}{\mathcal{P}}
\newcommand{\calF}{\mathcal{F}}
\newcommand{\EE}{\mathbb{E}}
\newcommand{\RR}{\mathbb{R}}
\newcommand{\eps}{\varepsilon}

\hypersetup{
  pdftitle={Information Requirements for Service Allocation and Aggregate Verification},
  pdfauthor={Joss Armstrong}
}

\begin{document}

\title{Information Requirements for Service Allocation\\and Aggregate Verification}

\author{Joss Armstrong\\
  Ericsson Ireland, Athlone\\
  \texttt{joss.armstrong@ericsson.com}}

\date{September 2026}

\maketitle

\begin{abstract}
A service system may use the same categories to assign standard allocations and
to check whether each service is fulfilled. Finer categories can match
individual needs more closely, but they divide the observations available for
monitoring. We study this conflict for a fixed menu from which participants
select by declaring a category, with fulfilment assessed from each category's
aggregate outcomes during a fixed period. We relate allocation loss to variation
in preferred allocations within categories and identify conditions under which
aggregate observations preserve the verification performance of individual
records. For nested refinements under stated utility and observation
assumptions, an allocation-loss tolerance and a per-category detection target
define a feasibility band. Categories must be fine enough to provide suitable
allocations but sufficiently populated to support verification. A
source-dependent lower bound on declaration entropy and a minimum contributor
requirement give necessary information and population constraints. For an
explicit finite population with quadratic utility and binary service outcomes,
we prove the exact feasible range across all categorical designs and exhibit
designs attaining the information lower bound at specified tolerances. The
results provide conditions for choosing categories jointly for allocation and
verification.
\end{abstract}

\noindent\textbf{Keywords:} categorical allocation; aggregate verification;
feasibility band; quantisation; information requirements; service monitoring

\section{Introduction}
\label{sec:intro}
Consider a provider offering several standard service configurations. A
participant knows its own requirements and selects one of the offered
configurations. Participants selecting the same option form a category. The
provider then monitors each category separately. For example, during a fixed
service period it may record how many attempts fail that category's stated
requirement. Serving the participant requires its selected label, while assessing
fulfilment requires evidence about the service associated with that label.

The category system therefore performs two jobs. It determines which allocation
a participant receives, and it determines which observations are pooled for a
verification decision. These jobs need not favour the same level of detail. With
few categories, participants with different preferred allocations receive the
same profile. Splitting categories can improve that fit. Under the observation
conditions studied here, however, splitting also leaves smaller populations
supporting the category-level tests. A system can distinguish requirements finely
enough to allocate well but too finely to verify every service reliably within
the available period.

We ask when both requirements can be met. Allocation quality is measured by the
utility lost relative to the allocation best suited to each participant.
Verification quality is the probability of detecting a specified service
degradation at a fixed false-alarm allowance. These are different questions. A
service can match its specification while being a poor fit for an individual,
and a well-chosen service can be delivered poorly. The analysis keeps the loss
from categorical allocation separate from the statistical evidence of service
fulfilment.

The starting point is the individual's selection of means relative to its own
end. The coordinator uses that choice rather than requiring a complete account of
the individual's demand. For quantitative analysis, we specify the utility model,
population weights, and observation laws. Each candidate design specifies a menu
that remains fixed during use. We compare these designs mathematically; learning
a menu from data is not part of the analysis.

The common category system creates a joint allocation--verification constraint.
We establish a feasibility band for ordered constructions satisfying the two
comparison conditions, necessary information and population bounds for broader
sources, and an exact existence band with attained information minima for a
finite quadratic population.

Quantisation and comparison of statistical experiments provide the mathematical
ingredients. Here they are connected through a partition that selects
allocations and forms the populations on which those allocations are verified. A
short declaration-gain proposition records the supporting condition under which
participants select the intended offered profiles. The paper is organised around
the allocation loss, the verification requirement, and the resulting feasible
designs. The worked construction is followed by a numerical allocation
illustration and a discussion of related work.

\section{The mechanism}
\label{sec:mechanism}
\subsection{The menu and the lifecycle}

A menu consists of a finite set of labelled service profiles. A profile
specifies the allocation attached to a declaration. The same profile applies to
every participant selecting that label, and their membership defines the
category. The menu remains fixed for the service decision. A participant
selects among published alternatives rather than negotiating a new allocation.

The allocation model concerns fit to requirements, not a universal ranking of
service from worse to better. Under the common quadratic utility used for the
exact construction, each participant has a preferred allocation and loses
utility as the supplied profile moves away from it. For a scalar illustration, a
profile can be read as a setpoint: the largest offered value need not be the
best choice for every participant. More generally, the utility function
specifies how participants rank the offered profiles.

The coordinator first publishes the menu. A participant then declares a
category, the attached profile is assigned, and outcomes are collected during a
fixed monitoring period. The verifier judges fulfilment separately for each
category and reports its result. The selected label is sufficient to identify
the assigned profile once the menu is known. It need not identify the
participant's full demand.

The assigned profile and the observed service record have distinct roles. The
welfare calculation measures how well the fixed allocation suits the participant
when assessed under the stated utility model. The verification calculation
distinguishes the stipulated fulfilled and degraded service states from their
observations. It does not infer the participant's requirements or add an
unmodelled malfunction penalty to the allocation-loss formula.

During a declaration decision, a participant may choose any offered category.
Changing the declaration changes only the selected profile and the terms already
attached to it. Any charge or participation option used in the comparison is
part of those stated terms. Category membership does not change the offered
profiles or the payoffs through rationing or congestion. This fixed-profile
premise is what allows us to study the information and monitoring consequences of
category design without introducing a separate resource-clearing mechanism.

Menu design and profile estimation precede this decision. The mathematical
analysis and the constructed examples may use a demand distribution to assess or
construct a menu. Once the menu is established, allocating its profiles uses
declarations rather than individual demand reports.

\subsection{Notation and timing}

Let demand be $T$ with values in $\calT$, let $U(t,r)$ be the payoff from
allocation $r$ to a participant of type $t$, and let
\[
  x(t)\in\arg\max_{r} U(t,r)
\]
be the type-wise oracle allocation. The oracle is an analytical comparator. It
is not information the coordinator is assumed to hold while delivering service.
Allocations and preferred allocations take values in $\RR^d$, with the Euclidean
norm used below.

Fix a finite menu $\calM=(r_1,\dots,r_K)$ and a measurable designated category
rule $C=c(T)$, with a fixed deterministic tie convention. The category rule
specifies the allocation used in the analysis. A designated rule describes
unrestricted self-selection when its profile is preferred among those offered.
Proposition~\ref{prop:gain} states the corresponding declaration-gain
condition, and the exact construction verifies it for its actual memberships.
Write
\[
  \ell(t,r)=U(t,x(t))-U(t,r),\qquad
  D(\calM,c)=\EE\,\ell(T,r_C),\qquad
  e(\calM,c)=\EE\|x(T)-r_C\|^2 .
\]
Thus, $\ell$ is the allocation shortfall for a particular demand, $D$ is its
population expectation, and $e$ is the corresponding squared allocation error.
Welfare is expected utility under the specified model and population weights.
Those cardinal comparisons are modelling choices used for this analysis. They
do not follow from purposeful choice alone. All expectations used below are
assumed finite. When the profiles are the category means we write
\[
  r_c=\EE[x(T)\mid C=c],\qquad e(\calP)=e(\calM,c)
\]
for the partition $\calP$ induced by $c$.

We use absolute loss $D$. A normalised ratio is introduced only where its
denominator is stated. In particular, dividing by oracle welfare requires that
oracle welfare be strictly positive.

Verification uses a fixed observation period, a per-category false-alarm
allowance $a$, and a required power $p_*>a$. A false alarm declares degradation
when the service follows its fulfilled-state law. Detection power is the
probability of declaring degradation under the specified degraded-state law. The
requirement must hold for each category, so we use the power of the least
detectable category when judging a design. In a finite realised population, $M$
denotes the number of contributors and $m_c$ the actual category counts.
Welfare weights and monitoring counts coincide only in constructions that
explicitly identify them.

For deterministic declarations, category entropy is $H(C)=I(T;C)\le\log_2 K$
bits. A fixed-length label requires $\lceil\log_2 K\rceil$ bits. These
quantities describe the declaration channel only; monitoring records are
accounted for separately.

When applying the occupancy ceiling, $K$ counts the categories required to meet
the monitoring target. The common construction uses nonempty categories. An
empty category that is still required to meet the same target has no evidence
and fails $p_*>a$. A bound on occupied, monitored categories is not a bound on
arbitrarily many unused labels.

\subsection{Assumptions}

The utility assumptions control the cost of grouping different requirements
together. The observation assumptions specify when a record can reproduce
another record for the same fulfilled-versus-degraded comparison. Each is used
only in the results that name it.

\begin{assumption}[Interior curvature]
\label{as:curv}
The oracle allocation is interior, the gradient of $U(t,\cdot)$ vanishes there,
and the negative utility Hessian has eigenvalues in $[\mu,L]$ with $0<\mu\le L$
along the segments used in the comparison.
\end{assumption}

\begin{assumption}[Common quadratic utility]
\label{as:quad}
$U(t,r)=B(t)-s\|x(t)-r\|^2$ for some $s>0$ and some $B$.
\end{assumption}

Assumption~\ref{as:curv} bounds how quickly utility falls away from the
preferred allocation. Assumption~\ref{as:quad} makes that loss exactly
proportional to squared distance, with $\mu=L=2s$. It is the
matching-preferences model used by the exact construction, rather than an
assumption that every participant always prefers a larger resource allocation.

\begin{assumption}[State-independent observation]
\label{as:obs}
Let $\mathcal H\in\{0,1\}$ indicate fulfilled or degraded service, let $A$ be
the directly observed aggregate record, and let $Y$ be the comparison
individual record. There is a single randomisation rule, not depending on
$\mathcal H$, that generates $Y$ from $A$. Equivalently
$\mathcal H\to A\to Y$ is a Markov chain.
\end{assumption}

Operationally, once $A$ is known, the remaining variation in $Y$ carries no
further evidence about the service state. The aggregate observer can simulate
the comparison record without knowing whether the service was fulfilled. This
relation must be established for the chosen observations. It is not implied
merely by calling a record an aggregate.

\begin{assumption}[Pool reproduction]
\label{as:repro}
For a fixed service and $m'\le m$, there is a randomisation that transforms the
$m$-contributor aggregate record into a record with exactly the law of the
$m'$-contributor record under both service states, without knowledge of the
state.
\end{assumption}

This is a comparison between pool sizes. The larger aggregate must contain
enough information to simulate the smaller pool under either service state. The
failure-count experiment below supplies a direct example. For a refinement that
changes the service profiles, the verification-ordering lemma requires this
relation between the particular parent and child experiments being compared.

\section{Welfare}
\label{sec:welfare}
Participants placed in one category receive one profile despite having different
preferred allocations. The first result measures the resulting loss. Under
bounded utility curvature, squared allocation error controls welfare loss from
above and below. Under common quadratic utility, the two are exactly
proportional.

\begin{theorem}[Welfare Bound]
\label{thm:welfare}
Under Assumption~\ref{as:curv}, for any menu and any category rule,
\begin{equation}
\label{eq:W}
  \frac{\mu}{2}\,e(\calM,c)\;\le\;D(\calM,c)\;\le\;\frac{L}{2}\,e(\calM,c).
\end{equation}
\end{theorem}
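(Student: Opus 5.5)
The plan is to establish the two-sided bound pointwise in the type $t$ and then take expectations. Fix a type $t$ and write $x=x(t)$ and $r=r_{c(t)}$. Define $\phi(\tau)=U(t,x+\tau(r-x))$ for $\tau\in[0,1]$. Assumption~\ref{as:curv} says that the segment from $x$ to $r$ is one of those used in the comparison. On it $U(t,\cdot)$ is twice differentiable, with negative Hessian eigenvalues in $[\mu,L]$, and the gradient vanishes at $x$ because the oracle allocation is interior.

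The pointwise step is Taylor's theorem with integral remainder:
\[
  \ell(t,r)=\phi(0)-\phi(1)=-\phi'(0)-\int_0^1(1-\tau)\,\phi''(\tau)\,d\tau .
\]
Here $\phi'(0)=\nabla_r U(t,x)^\top(r-x)=0$. We also have $\phi''(\tau)=(r-x)^\top\nabla_r^2U(t,x+\tau(r-x))(r-x)$. The eigenvalue condition then gives
\[
  -\phi''(\tau)\in\bigl[\mu\|r-x\|^2,\,L\|r-x\|^2\bigr]
\]
for every $\tau$. Since $\int_0^1(1-\tau)\,d\tau=1/2$, this yields
\[
  \frac{\mu}{2}\|x(t)-r_{c(t)}\|^2\le \ell(t,r_{c(t)})\le \frac{L}{2}\|x(t)-r_{c(t)}\|^2
\]
for every $t$.

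To finish, apply this bound with $t=T$ and take expectations. The notation section assumes all expectations are finite, and expectation is monotone, so $D(\calM,c)=\EE\,\ell(T,r_C)$ and $e(\calM,c)=\EE\|x(T)-r_C\|^2$ satisfy~\eqref{eq:W}. Measurability of $c$ makes $r_C$ a random variable, so both expectations are well defined. No property of the menu or the category rule is used beyond this, which is why the bound holds for any menu and rule.

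The main obstacle is regularity rather than the inequality itself. The interpretation of Assumption~\ref{as:curv} must supply three things: twice differentiability along the entire closed segment, validity of the Hessian bound at every point of it, and the vanishing gradient at the interior maximiser. If only the Hessian bounds are given, possibly almost everywhere, I would fall back on the integral form of the mean value theorem applied to $\phi'$, which needs only absolute continuity of $\phi'$. That is the point where I would state the regularity convention explicitly. Under Assumption~\ref{as:quad} the same computation is exact, with $\mu=L=2s$, giving $D=s\,e$.
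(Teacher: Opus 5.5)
Your proof is correct and is essentially the paper's argument. You make a second-order Taylor expansion of $U(t,\cdot)$ along the segment from $x(t)$ to $r_{c(t)}$, use the vanishing gradient to remove the linear term, and use the Hessian eigenvalue bounds to get $\tfrac{\mu}{2}\|x(t)-r_{c(t)}\|^2\le\ell(t,r_{c(t)})\le\tfrac{L}{2}\|x(t)-r_{c(t)}\|^2$ pointwise before taking expectations over $T$. The only difference is that you use the integral form of the remainder where the paper uses the Lagrange form at an intermediate point $\tilde r$; that change is immaterial, and your remark on the regularity needed covers it.
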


\begin{proof}
Fix $t$ and expand $U(t,\cdot)$ from $x(t)$ to the delivered profile $r_{c(t)}$.
The gradient vanishes at $x(t)$, so the linear term is zero. The quadratic
remainder is
$\tfrac12 (r_{c(t)}-x(t))^\top \nabla^2_{rr}(-U)(t,\tilde r)\,(r_{c(t)}-x(t))$
for some $\tilde r$ on the segment, and its eigenvalues lie in $[\mu,L]$. Hence
$\tfrac{\mu}{2}\|x(t)-r_{c(t)}\|^2 \le \ell(t,r_{c(t)}) \le
\tfrac{L}{2}\|x(t)-r_{c(t)}\|^2$. Take expectations over $T$.
\end{proof}

Two consequences are worth stating separately, because they concern the two
design rules that appear later.

\begin{corollary}[Mean profiles]
\label{cor:mean}
Let $\calP$ be a partition, let $D_{\mathrm{mean}}(\calP)$ be the loss of the
mean-profile rule on it, and let $D_{\mathrm{opt}}(\calP)$ be the smallest loss
among all rules that are constant on each cell. Under
Assumption~\ref{as:curv},
\begin{equation}
\label{eq:Wmean}
  \frac{\mu}{2}\,e(\calP)\;\le\;D_{\mathrm{opt}}(\calP)\;\le\;
  D_{\mathrm{mean}}(\calP)\;\le\;\frac{L}{2}\,e(\calP).
\end{equation}
Under Assumption~\ref{as:quad} the mean rule is optimal on its partition and
$D=s\,e$ exactly.
\end{corollary}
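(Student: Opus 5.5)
The chain in \eqref{eq:Wmean} has three inequalities, and I will take them one at a time, with Theorem~\ref{thm:welfare} doing most of the work. Throughout I read $D_{\mathrm{opt}}(\calP)$ as the infimum of $D(\calM,c)$ over menus attaching one profile $\rho_P$ to each cell $P\in\calP$. I also assume that Assumption~\ref{as:curv} holds along the segments from $x(t)$ to every competing profile. The statement's phrase ``the segments used in the comparison'' has to cover these, and I will say so explicitly.

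\emph{Right inequality and middle inequality.} The mean-profile rule is one admissible rule constant on cells. Its error is $e(\calP)$ by definition, so Theorem~\ref{thm:welfare} gives $D_{\mathrm{mean}}(\calP)\le \tfrac L2 e(\calP)$ directly. The middle inequality $D_{\mathrm{opt}}\le D_{\mathrm{mean}}$ holds because the mean rule lies in the class over which the optimum is taken.

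\emph{Left inequality.} This is the one nontrivial step. Take any rule with profiles $\rho_P$ constant on cells. Theorem~\ref{thm:welfare} applied to this rule gives
\[
D\ge \tfrac{\mu}{2}\,\EE\|x(T)-\rho_C\|^2 .
\]
Then use the bias--variance decomposition within each cell:
\[
\EE\big[\|x(T)-\rho_P\|^2\mid C=P\big]
=\EE\big[\|x(T)-r_P\|^2\mid C=P\big]+\|r_P-\rho_P\|^2
\ge \EE\big[\|x(T)-r_P\|^2\mid C=P\big].
\]
The cross term vanishes because $r_P$ is the conditional mean. Averaging over cells gives $\EE\|x(T)-\rho_C\|^2\ge e(\calP)$. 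Taking the infimum over rules yields $D_{\mathrm{opt}}(\calP)\ge \tfrac\mu2 e(\calP)$. Finiteness of second moments, which the paper assumes, justifies the decomposition.

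\emph{Quadratic case.} Under Assumption~\ref{as:quad}, $\ell(t,r)=s\|x(t)-r\|^2$ exactly, since $U(t,x(t))=B(t)$. Hence, for any constant-on-cells rule,
\[
D=s\,\EE\|x(T)-\rho_C\|^2 = s\,e(\calP)+s\,\EE\|r_C-\rho_C\|^2 .
\]
This is minimised exactly when $\rho_P=r_P$ on every cell of positive mass. So the mean rule is optimal and $D_{\mathrm{mean}}=s\,e(\calP)$, consistent with $\mu=L=2s$.

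The main obstacle is the scope of Assumption~\ref{as:curv}. The left inequality needs the curvature lower bound along segments to arbitrary competing profiles $\rho_P$, not only to the means. I will either state this reading of the assumption, or restrict the infimum to profiles for which the assumption holds. The rest is routine.
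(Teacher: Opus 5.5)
Your proposal is correct and takes essentially the same route as the paper. Both arguments get the left inequality from the lower bound of Theorem~\ref{thm:welfare} together with the fact that the conditional mean minimises within-cell squared error, the middle inequality from the mean rule belonging to the cell-constant class, the right inequality from Theorem~\ref{thm:welfare} applied to the mean rule, and the quadratic case from $\ell(t,r)=s\|x(t)-r\|^2$ pointwise. Your remark that Assumption~\ref{as:curv} must hold along the segments to arbitrary competing profiles is the reading the paper's proof uses implicitly through the phrase ``the segments used in the comparison.''
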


\begin{proof}
The conditional mean minimises squared error among cell-constant profiles, so
every cell-constant candidate has squared error at least $e(\calP)$ and, by the
lower bound in~\eqref{eq:W}, loss at least $\tfrac{\mu}{2}e(\calP)$. The mean
rule is one such candidate, which gives the middle inequality. The upper bound
is~\eqref{eq:W} applied to the mean rule. Under Assumption~\ref{as:quad} the
loss is $s\|x(t)-r\|^2$ pointwise, so minimising loss on a cell is minimising
squared error on it, and the two coincide with $\mu=L=2s$.
\end{proof}

\begin{corollary}[Refinement]
\label{cor:refine}
Let $\calP'$ refine $\calP$, both with mean profiles. Then
\begin{equation}
\label{eq:Wrefine}
  e(\calP)-e(\calP')=\EE\|r_{C'}-r_C\|^2\;\ge\;0 .
\end{equation}
\end{corollary}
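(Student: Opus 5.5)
The plan is the Pythagorean decomposition of squared error under nested conditioning. Write $X=x(T)$, and let $C$ and $C'$ be the category variables of $\calP$ and $\calP'$. Because $\calP'$ refines $\calP$, the coarse label is a function of the fine one, say $C=g(C')$. Hence $\sigma(C)\subseteq\sigma(C')$.

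With mean profiles, $r_{C'}=\EE[X\mid C']$ and $r_C=\EE[X\mid C]$. The tower property then gives
\[
\EE[r_{C'}\mid C]=\EE\bigl[\EE[X\mid C']\mid C\bigr]=\EE[X\mid C]=r_C .
\]

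Next, decompose $X-r_C=(X-r_{C'})+(r_{C'}-r_C)$ and expand the square. This gives
\[
e(\calP)=\EE\|X-r_C\|^2=e(\calP')+\EE\|r_{C'}-r_C\|^2+2\,\EE\bigl[(X-r_{C'})^\top(r_{C'}-r_C)\bigr].
\]
The remaining task is to show that the cross term vanishes.

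\begin{itemize}
\item The vector $r_{C'}-r_C$ is $\sigma(C')$-measurable, since $r_C=r_{g(C')}$.
\item We may therefore condition on $C'$ and pull that factor out.
\item What remains is $\EE[X-r_{C'}\mid C']=0$ by the definition of $r_{C'}$.
\end{itemize}

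So the cross term is $\EE\bigl[(r_{C'}-r_C)^\top\,\EE[X-r_{C'}\mid C']\bigr]=0$. This yields the identity in \eqref{eq:Wrefine}, and nonnegativity is immediate because the right-hand side is the expectation of a squared norm.

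The only delicate step is justifying the conditioning argument for the cross term, which requires integrability. This follows from the standing assumption that all expectations are finite. With $\EE\|X\|^2<\infty$, all the conditional means lie in $L^2$, and Cauchy--Schwarz makes the cross term integrable.

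If a cell has zero probability, its profile does not affect any expectation, so such cells can be ignored. For finite partitions the argument can also be done cell by cell: within each coarse cell, it is the between/within variance decomposition over the fine cells.
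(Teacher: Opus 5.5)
Your proposal is correct and is essentially the paper's own argument: write $X-r_C=(X-r_{C'})+(r_{C'}-r_C)$, expand the square, and make the cross term vanish by conditioning on $C'$. This works because $r_C$ is $C'$-measurable and $r_{C'}$ is the conditional mean given $C'$. The tower-property identity $\EE[r_{C'}\mid C]=r_C$ that you derive is not needed for that step, but it does no harm.
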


\begin{proof}
Each finer cell sits inside one coarser cell, so
$x-r_C=(x-r_{C'})+(r_{C'}-r_C)$. Square and take expectations. Conditioning the
cross term on $C'$ gives zero, because $r_{C'}$ is the conditional mean of $x$
given $C'$ and $r_C$ is $C'$-measurable.
\end{proof}

Under common quadratic utility, refinement with mean profiles cannot increase
the actual allocation loss. For general utility, the best cell-constant policy
also cannot become worse, because it can retain the coarse policy after a split.
The arithmetic-mean implementation need not coincide with that best policy. Its
curvature bounds remain valid, but they alone do not establish monotonicity of
its actual non-quadratic loss.

\subsection{Declaration gain}
\label{sec:gain}

A fixed menu also permits a simple check on the designated category rule. Define
the gain from an alternative declaration by
\[
  g(t)=\max_{1\le k\le K}U(t,r_k)-U(t,r_{c(t)}),
\]
where the current declaration is included among the alternatives.

\begin{proposition}[Declaration Gain]
\label{prop:gain}
For a fixed offered menu and any designated category rule,
\[
  0\le g(t)\le\ell(t,r_{c(t)}),
  \qquad
  \EE g(T)\le D(\calM,c).
\]
For mean profiles under the Welfare Bound's assumptions, $\EE g(T)\le
L\,e(\calP)/2$. The gain is zero exactly when the designated profile maximises
the participant's utility over the offered menu. Under common quadratic
utility, nearest-profile selection has this property.
\end{proposition}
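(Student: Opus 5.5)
The plan is to compare pointwise with the oracle and then take expectations. I would argue for each fixed $t$ and only afterwards integrate over $T$.

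\emph{Pointwise sandwich.} The maximum over the menu includes the current declaration $k=c(t)$, so $\max_k U(t,r_k)\ge U(t,r_{c(t)})$, which gives $g(t)\ge 0$. For the upper bound, $x(t)$ maximises $U(t,\cdot)$ over all allocations, so $U(t,r_k)\le U(t,x(t))$ for every $k$. Taking the maximum over $k$ and subtracting $U(t,r_{c(t)})$ yields
\[
g(t)\le U(t,x(t))-U(t,r_{c(t)})=\ell(t,r_{c(t)}).
\]

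\emph{Expectations.} Taking expectations over $T$ gives $\EE g(T)\le \EE\,\ell(T,r_C)=D(\calM,c)$. This uses the standing assumption that all expectations are finite, together with measurability of $g$. Measurability holds because $g$ is a finite maximum of measurable functions minus a measurable one. For mean profiles under Assumption~\ref{as:curv}, I then chain this with Theorem~\ref{thm:welfare}, or equivalently the upper bound in Corollary~\ref{cor:mean}, to get $\EE g(T)\le D_{\mathrm{mean}}(\calP)\le L\,e(\calP)/2$.

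\emph{Zero gain and the quadratic case.} By definition, $g(t)=0$ exactly when $U(t,r_{c(t)})=\max_k U(t,r_k)$, that is, when the designated profile maximises utility over the menu. Under Assumption~\ref{as:quad}, $U(t,r_k)=B(t)-s\|x(t)-r_k\|^2$ with $s>0$. Maximising over $k$ is therefore the same as minimising $\|x(t)-r_k\|$. A nearest-profile rule, with any deterministic tie convention, selects a minimiser, so $g\equiv0$.

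The only delicate point is the interplay between Assumption~\ref{as:curv} and the category rule. The curvature bound is stated only along the segments used in the comparison, so I will apply Theorem~\ref{thm:welfare} only as stated, to the designated mean-profile rule. I will not invoke it for the alternative profiles. The step from $g$ to $\ell$ needs no curvature at all, since it uses only the oracle's maximality. Apart from this, every step is immediate.
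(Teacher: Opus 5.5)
Your proposal is correct and takes the same route as the paper. The upper bound comes from the oracle's maximality and the lower bound from including the current declaration; averaging gives $\EE g(T)\le D(\calM,c)$, and under common quadratic utility maximising utility reduces to minimising distance. Your explicit use of the upper bound in Corollary~\ref{cor:mean} to get the $L\,e(\calP)/2$ claim fills in a step that the paper's proof leaves implicit.
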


\begin{proof}
No offered profile provides more utility than the participant's oracle
allocation. Subtracting the utility of the designated profile proves the
pointwise upper bound, and averaging proves the expected bound. The current
declaration gives the lower bound of zero. A utility-maximising declaration
admits no profitable alternative. Under common quadratic utility, maximising
utility is exactly minimising distance to the preferred allocation. A
randomised declaration cannot outperform the best offered profile.
\end{proof}

The proposition bounds the gain from choosing another existing profile. It does
not assert that the menu itself is optimal.
Appendix~\ref{app:fixedmenu} records two fixed-menu consequences of it, an
equality between the constrained and unconstrained optima at a given menu size
and an exact expression for the expected gain. Neither is used later.

\section{Aggregate verification}
\label{sec:aggdom}
Allocation asks whether a profile fits a participant's requirements.
Verification asks whether the observed service is consistent with fulfilment of
that profile. Fix the service, the monitoring period, and the fulfilled and
degraded observation laws.

Let $A$ be the aggregate record and $Y$ the comparison individual record. Under
the state-independent observation assumption, an observer with $A$ can generate
a record distributed like $Y$ under either service state. The additive model
$Y_{ij}=A_i+\xi_{ij}$ is one instance when the added noise is independent of the
aggregate and the service state. The aggregate itself may be noisy.

Write $\pi^*_A(a)$, $\pi^*_Y(a)$, and $\pi^*_{(A,Y)}(a)$ for the best attainable
powers at false-alarm allowance $a$, allowing randomised tests and taking
suprema where needed.

\begin{theorem}[Aggregate Dominance]
\label{thm:aggdom}
Under Assumption~\ref{as:obs}, for every $a\in[0,1]$,
\begin{equation}
\label{eq:AD}
  \pi^*_{(A,Y)}(a)=\pi^*_A(a)\;\ge\;\pi^*_Y(a).
\end{equation}
\end{theorem}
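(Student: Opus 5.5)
The plan is to prove the two parts of \eqref{eq:AD} separately, using a randomised-test simulation argument in both. First, the trivial inequalities come from monotonicity of information. Any test based on $A$ alone is a test based on $(A,Y)$ that ignores $Y$, and it has the same size and power under $\mathcal H=0$ and $\mathcal H=1$. Hence $\pi^*_{(A,Y)}(a)\ge\pi^*_A(a)$. In the same way, $\pi^*_{(A,Y)}(a)\ge\pi^*_Y(a)$. We take suprema over randomised tests, so nothing depends on attainment.

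The substantive step is the reverse inequality $\pi^*_A(a)\ge\pi^*_{(A,Y)}(a)$. By Assumption~\ref{as:obs}, there is a single Markov kernel $Q(\cdot\mid a')$ that does not depend on $\mathcal H$. The joint law of $(A,Y)$ under state $h$ is then $P_h^A(da')\,Q(dy\mid a')$.

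Let $\phi(a',y)\in[0,1]$ be any randomised test on the pair. Define a test on $A$ alone by
\[
  \psi(a')=\int \phi(a',y)\,Q(dy\mid a').
\]
This is a measurable $[0,1]$-valued function of $a'$, and it requires no knowledge of $\mathcal H$ because $Q$ is state-free. By Fubini, $\EE_h\psi(A)=\EE_h\phi(A,Y)$ for $h=0,1$. So $\psi$ has exactly the same false-alarm rate and power as $\phi$. Taking the supremum over $\phi$ of size at most $a$ gives $\pi^*_A(a)\ge\pi^*_{(A,Y)}(a)$, and therefore equality.

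Combining the equality with the first step gives $\pi^*_A(a)=\pi^*_{(A,Y)}(a)\ge\pi^*_Y(a)$. The edge cases $a=0$ and $a=1$ need no separate treatment, because the argument works for every fixed size constraint.

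I expect the main obstacle to be measure-theoretic rather than conceptual. We must read ``a single randomisation rule generates $Y$ from $A$'' as the existence of a regular conditional distribution of $Y$ given $A$ that is common to both states. The Markov-chain phrasing $\mathcal H\to A\to Y$ gives this directly when $Y$ takes values in a standard Borel space, which I would assume as the setting.

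With this kernel in hand, we also need joint measurability of $\phi$ so that $\psi$ is measurable. The rest is the standard Blackwell garbling argument: it shows that $(A,Y)$ and $A$ are equivalent experiments and that $Y$ is a garbling of $A$.
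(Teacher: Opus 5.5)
Your proposal is correct and follows essentially the same Blackwell simulation argument as the paper. Averaging a pair-test $\phi$ over the state-free kernel $Q$ is exactly the paper's ``simulate $Y$ from $A$ and run the test'' step, and you rightly use the joint law $P_h^A(da')\,Q(dy\mid a')$ (the Markov-chain reading), which is what justifies reproducing tests on $(A,Y)$. The only difference is cosmetic: you obtain $\pi^*_A(a)\ge\pi^*_Y(a)$ through $\pi^*_{(A,Y)}(a)\ge\pi^*_Y(a)$ and the equality, rather than by simulating a $Y$-test directly.
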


\begin{proof}
Take any test using the individual record. An aggregate observer can generate a
simulated individual record with the stated randomisation rule and run that
test. The simulated record has the same distribution as the actual individual
record under fulfilled service and under degraded service. False-alarm
probability and power are therefore unchanged. Every individual-record test can
be matched from the aggregate, giving the inequality after optimisation. The
same simulation reproduces any test using both records. Since an observer with
both records may also ignore the individual record, their optimal power equals
that available from the aggregate alone.
\end{proof}

This is the classical comparison-of-experiments argument
\citep{blackwell1953}. Its role here is to establish when category-level
fulfilment can be checked without additional individual detail. The theorem
gives weak dominance. The Gaussian specialisation in Appendix~\ref{app:aggdom}
identifies strictness conditions and quantifies the gap. The equality for an
observer holding both records also makes clear that merely possessing extra
information cannot reduce optimal performance.

\section{Information budget feasibility}
\label{sec:budget}
The same categories determine the allocation rule and the populations supplying
evidence about service fulfilment. We now hold the population, observation
period, degradation specification, and error criterion fixed while comparing
category designs. An allocation may become more accurate as categories are
refined, yet verification must still work separately for every category. The
relevant monitoring performance is therefore the power of the least detectable
category, not an average across categories.

\subsection{The band}

Fix a finite ordered sequence of constructions on the same population, indexed
by $j$. Write $D_j$ for actual welfare loss at level $j$, and $\pi_j=\min_c
\pi_{j,c}$ for the least optimal power among the occupied categories, at a
per-category false-alarm allowance that is not enlarged as $j$ grows.

The opposing effects follow from two reproduction arguments. A refined
allocation rule can retain the parent's profile before improving the fit within
each child. An observer with a parent record can reproduce a child record when
the observation assumption holds. We state these comparisons before taking their
intersection.

\begin{lemma}[Welfare ordering]
\label{lem:worder}
Let the categories form a nested refinement on a fixed population, with mean
profiles. Under Assumption~\ref{as:quad}, $D_{j+1}\le D_j$.
\end{lemma}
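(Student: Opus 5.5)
The plan is to reduce the welfare ordering to the squared-error identity already established, using the exact proportionality available under common quadratic utility. Write $\calP_j$ for the partition at level $j$ and $C_j$ for the corresponding designated category. By Corollary~\ref{cor:mean}, under Assumption~\ref{as:quad} the mean-profile rule on each level satisfies $D_j=s\,e(\calP_j)$ exactly, with the same constant $s>0$ at every level. The constant is common because the utility model and the population are held fixed along the sequence. The claim is therefore equivalent to $e(\calP_{j+1})\le e(\calP_j)$.

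Nestedness means $\calP_{j+1}$ refines $\calP_j$: each child cell lies inside a single parent cell, so $C_j$ is a function of $C_{j+1}$. Both levels use mean profiles, so Corollary~\ref{cor:refine} applies directly and gives
\[
  e(\calP_j)-e(\calP_{j+1})=\EE\|r_{C_{j+1}}-r_{C_j}\|^2\ge 0.
\]
Multiplying by $s$ then yields $D_j-D_{j+1}=s\,\EE\|r_{C_{j+1}}-r_{C_j}\|^2\ge 0$. This also gives the exact size of the welfare gain from each refinement.

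The only point requiring care is bookkeeping rather than analysis. Corollary~\ref{cor:refine} needs finite second moments of $x(T)$, which the paper assumes globally. It also needs the tie convention to produce a genuine refinement, meaning the labels at level $j$ are measurable with respect to those at level $j+1$ almost surely; this is what "nested refinement on a fixed population" supplies.

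I would also remark that $D_j$ here is the mean-profile loss, not the loss under nearest-profile self-selection. If declarations were instead nearest-profile choices, loss could only decrease further by Proposition~\ref{prop:gain}. However, the chain of inequalities would then need to be argued separately, so the lemma is stated for the designated mean-profile rule.
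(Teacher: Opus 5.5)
Your proof is correct and follows the paper's own argument: Corollary~\ref{cor:refine} gives $e(\calP_{j+1})\le e(\calP_j)$ for nested mean-profile partitions, and Corollary~\ref{cor:mean} converts this into $D_{j+1}\le D_j$ through the exact identity $D=s\,e$ under Assumption~\ref{as:quad}. Your additional expression $D_j-D_{j+1}=s\,\EE\|r_{C_{j+1}}-r_{C_j}\|^2$ is also correct, though the lemma does not need it.
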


\begin{proof}
Corollary~\ref{cor:refine} gives $e(\calP_{j+1})\le e(\calP_j)$, and
Corollary~\ref{cor:mean} gives $D=s\,e$ exactly under
Assumption~\ref{as:quad}.
\end{proof}

\begin{lemma}[Verification ordering]
\label{lem:vorder}
Let the categories form a nested refinement, and suppose each child's stipulated
fulfilled/degraded aggregate experiment satisfies Assumption~\ref{as:repro}
relative to its parent's, at a false-alarm allowance no larger. Then
$\pi_{j+1}\le\pi_j$.
\end{lemma}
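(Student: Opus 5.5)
The plan is to reduce the lemma to a per-category comparison and then pass to the minimum. Fix level $j$ and any occupied child category $c'$ at level $j+1$. Since the refinement is nested, $c'$ sits inside a unique parent category $c$ at level $j$. It suffices to show $\pi_{j+1,c'}\le\pi_{j,c}$. Then
\[
\pi_{j+1}=\min_{c'}\pi_{j+1,c'}\le\min_{c'}\pi_{j,\mathrm{par}(c')}\le\pi_j ,
\]
where the last step uses that each parent $c$ is the parent of at least one occupied child, because children partition the occupied parent population. Equivalently, the minimum over the image of the parent map is at least the minimum over all parents, and the image is all occupied parents. So $\min_{c'}\pi_{j,\mathrm{par}(c')}=\pi_j$ exactly, and the inequality is in the right direction.

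For the per-category comparison I will repeat the argument of Theorem~\ref{thm:aggdom}, with Assumption~\ref{as:repro} playing the role of Assumption~\ref{as:obs}. By hypothesis there is a state-independent randomisation that maps the parent's aggregate record $A_c$ to a record with exactly the law of the child's aggregate record $A_{c'}$, under both the fulfilled and the degraded state. Let $\phi$ be any (possibly randomised) test on the child record with false-alarm probability at most $a_{j+1}$. Composing $\phi$ with the randomisation gives a test on $A_c$ whose size and power under each state are unchanged, since the simulated record has the same law in each state. Hence this composed test has size at most $a_{j+1}\le a_j$ and the same power. Taking suprema gives
\[
\pi^*_{A_{c'}}(a_{j+1})\le\pi^*_{A_c}(a_{j+1})\le\pi^*_{A_c}(a_j).
\]
The second inequality is monotonicity of the optimal power in the allowance: a test admissible at a smaller allowance is admissible at a larger one.

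The main obstacle is interpretive rather than technical. The comparison must be between the child's stipulated experiment and the parent's stipulated experiment at the same period and error criterion. This is exactly what the hypothesis supplies, including the case where the child's profile differs from the parent's. I would note explicitly that no claim is made about the underlying physical observations beyond this reproduction relation. I would also note that suprema are used in defining $\pi^*$, so no attainment is needed.
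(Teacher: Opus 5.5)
Your proposal is correct and follows essentially the same route as the paper: a state-independent simulation of the child's aggregate record from the parent's (Assumption~\ref{as:repro}), preservation of size and power, monotonicity of optimal power in the allowance, and then passage to the minimum over occupied categories. The only difference is bookkeeping. The paper fixes a least-detectable parent and uses a single nonempty child of it. You instead bound every child by its parent and then use that every occupied parent has an occupied child. That is valid, but it invokes the reproduction hypothesis for more parent--child pairs than the argument needs.
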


\begin{proof}
Let $c$ be a category attaining $\pi_j$. It has at least one nonempty child $c'$.
An observer holding the aggregate of $c$ can simulate the aggregate of $c'$, by
Assumption~\ref{as:repro}, and then run any $c'$-based test. Size and power are
preserved, so $\pi_{j+1,c'}\le\pi_{j,c}$. Shrinking the allowance only removes
tests. The minimum at level $j+1$ ranges over a set containing $c'$, so
$\pi_{j+1}\le\pi_{j+1,c'}\le\pi_{j,c}=\pi_j$.
\end{proof}

Thus the stated nested construction has non-increasing allocation loss and
non-increasing guaranteed verification power. The following interval result also
applies to any other ordered family for which these same two orderings have been
established.

\begin{theorem}[Information Budget Feasibility]
\label{thm:budget}
Suppose $D_{j+1}\le D_j$ and $\pi_{j+1}\le\pi_j$ along the sequence, and fix a
welfare tolerance $\eps$ and a power target $p_*$. If $\{j:D_j\le\eps\}$ or
$\{j:\pi_j\ge p_*\}$ is empty, no level meets both requirements. Otherwise, with
$j_W=\min\{j:D_j\le\eps\}$ and $j_V=\max\{j:\pi_j\ge p_*\}$, the levels meeting
both are exactly
\begin{equation}
\label{eq:BAND}
  \calF=\{j:\;j_W\le j\le j_V\},
\end{equation}
which is itself empty when $j_W>j_V$.
\end{theorem}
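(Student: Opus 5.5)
The plan is to show that, under the two monotonicity hypotheses, each requirement by itself picks out a contiguous set of levels anchored at one end of the index range, and then take their intersection. The indexing is over a finite ordered set, so minima and maxima of nonempty subsets exist. Where the sequence satisfies the hypotheses of Lemma~\ref{lem:worder} and Lemma~\ref{lem:vorder}, those lemmas supply the orderings. Otherwise the orderings are taken as given, and no further use of the utility or observation assumptions is needed.

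\emph{Step 1: the welfare set is an upper set.} I will show that $W=\{j:D_j\le\eps\}$ is upward closed. If $j\in W$ and $j'\ge j$, then iterating $D_{i+1}\le D_i$ gives $D_{j'}\le D_j\le\eps$. Hence, when $W$ is nonempty, $W=\{j:j\ge j_W\}$ within the index range, with $j_W=\min W$.

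\emph{Step 2: the verification set is a lower set.} In the same way, $V=\{j:\pi_j\ge p_*\}$ is downward closed. If $j\in V$ and $j'\le j$, then iterating $\pi_{i+1}\le\pi_i$ gives $\pi_{j'}\ge\pi_j\ge p_*$. Hence $V=\{j:j\le j_V\}$ with $j_V=\max V$.

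\emph{Step 3: intersect.} The levels meeting both requirements form $W\cap V$. If either set is empty, the intersection is empty, which is the first clause of the theorem. Otherwise $W\cap V=\{j:j_W\le j\le j_V\}=\calF$ as in~\eqref{eq:BAND}, and this set is empty precisely when $j_W>j_V$.

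The only point needing care, and the nearest thing to an obstacle, is the convention on the index range. The sets $W$ and $V$ must be read as subsets of the finite index set of the sequence, so that ``all $j\ge j_W$'' means all indices at or above $j_W$ within that set. I also need to check that the orderings are used in the form stated, with non-strict inequalities. This matters because ties (equal $D_j$ or equal $\pi_j$) do not break the closure arguments; closure only requires the non-strict inequalities. No optimality or continuity arguments are required.
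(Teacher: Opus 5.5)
Your proposal is correct and is essentially the paper's proof: the paper also notes that $\{j:D_j\le\eps\}$ is an upper set $\{j\ge j_W\}$ because $D$ is non-increasing, that $\{j:\pi_j\ge p_*\}$ is a lower set $\{j\le j_V\}$ because $\pi$ is non-increasing, and that the feasible levels are their intersection. Your remarks on reading both sets inside the finite index range and on needing only non-strict inequalities are harmless clarifications of the same argument.
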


\begin{proof}
Welfare acceptance is an upper set, since $D$ is non-increasing, so it is exactly
$\{j\ge j_W\}$. Verification acceptance is a lower set, since $\pi$ is
non-increasing, so it is exactly $\{j\le j_V\}$. Their intersection is the stated
interval.
\end{proof}

Find the first level that meets the allocation-loss tolerance. If that level
also meets the verification target, it is the coarsest feasible level. If it
fails verification, every finer level fails too and the band is empty. The
theorem describes the stated ordered family. A converse across all alternative
designs requires a separate attaining construction, supplied below.

The verification ordering permits unequal category sizes and flat stretches of
power. What matters is the relation between the specified parent and child
experiments. No Gaussian calculation is needed for this comparison.

\subsection{An occupancy ceiling}

A failure-count example makes the pool-size comparison concrete. During a fixed
period, each contributor supplies one independent indicator of failure to meet
the category's service requirement. The fulfilled and degraded states have
failure probabilities $p_0$ and $p_1>p_0$. The verifier sees the contributor
count and the total number of failures, rather than identified individual
records.

Given $S_m=s$ failures among $m$ contributors, form $m$ markers, $s$ of them
failures, draw $m'$ of them uniformly without replacement, and count failures
among the draw. Every binary record with exactly $s$ failures has probability
$p^{s}(1-p)^{m-s}$, so conditional on $s$ all arrangements are equally likely
whatever $p$ is. The larger aggregate can therefore reproduce the smaller
aggregate under either state. It also reproduces the full indicator record
conditional on its count by distributing the failure markers uniformly over the
recorded positions. This supplies the observation premise of Aggregate Dominance
for this count experiment as well as the pool-reproduction premise.
Appendix~\ref{app:budget} gives the conditional law and the exact optimal
count-test powers.

Now suppose categories share this monitoring experiment, or another specified
experiment with the same pool-reproduction property. At the fixed false-alarm
allowance, let $q$ be the smallest number of contributors that achieves the
required power, assuming such a finite number exists. The service period and
degradation are held fixed when $q$ is defined. Write $\Pi(m,a)$ for the optimal
power with $m$ contributors. Assumption~\ref{as:repro} makes $\Pi$ non-decreasing
in $m$, by the argument of Lemma~\ref{lem:vorder} applied to pool sizes. With
$\Pi(0,a)=a$ and $p_*>a$, define the finite requirement
\[
  q=\min\{m\ge1:\Pi(m,a)\ge p_*\}.
\]

\begin{corollary}[Occupancy ceiling]
\label{cor:occ}
Every category meets the power target exactly when its actual count is at least
$q$. Hence
\begin{equation}
\label{eq:OCC}
  \min_c m_c\ge q
  \quad\Longrightarrow\quad
  Kq\le M,
  \qquad K\le\lfloor M/q\rfloor .
\end{equation}
\end{corollary}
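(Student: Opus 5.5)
The plan has two parts: show that the per-category power condition is equivalent to $m_c\ge q$, then turn the minimum count into the ceiling on $K$ by summing counts.

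\emph{Step 1 (monotonicity of $\Pi$).} The categories share the same experiment, so category $c$ with $m_c$ contributors has optimal power $\Pi(m_c,a)$ at the fixed allowance. For $m'\le m$, Assumption~\ref{as:repro} lets an observer holding the $m$-contributor aggregate simulate the $m'$-contributor record under both states without knowing the state. Any $m'$-based test therefore has a matching $m$-based test with the same size and power, exactly as in Lemma~\ref{lem:vorder}, so $\Pi(m',a)\le\Pi(m,a)$. For $m'=0$ there is no evidence, so the best test rejects with probability $a$ regardless of state and $\Pi(0,a)=a$. This also covers the empty-category remark: $\Pi(0,a)=a<p_*$.

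\emph{Step 2 (the equivalence).} Suppose $m_c\ge q$. Then $\Pi(m_c,a)\ge\Pi(q,a)\ge p_*$, the first inequality by monotonicity and the second by the definition of $q$. Suppose instead $m_c<q$. If $m_c\ge1$, minimality of $q$ gives $\Pi(m_c,a)<p_*$. If $m_c=0$, then $\Pi(0,a)=a<p_*$. Hence category $c$ meets the target exactly when $m_c\ge q$. So every category meets it exactly when $\min_c m_c\ge q$, and this is the quantity $\pi_j\ge p_*$ tracks.

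\emph{Step 3 (counting).} The $K$ required categories are disjoint subsets of the $M$ contributors, so $\sum_c m_c\le M$, with equality when every contributor is assigned. Therefore $Kq\le K\min_c m_c\le\sum_c m_c\le M$. Since $K$ is an integer, $K\le\lfloor M/q\rfloor$.

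The main obstacle is conceptual rather than computational. First, the definition of $q$ assumes a finite $m$ with $\Pi(m,a)\ge p_*$ exists; I take this as given, as the statement does. Second, the monotonicity of $\Pi$ must be derived from Assumption~\ref{as:repro}, and it genuinely requires that the categories share one experiment with a common $\Pi$. Otherwise each category would have its own threshold and the single number $q$ would not apply.
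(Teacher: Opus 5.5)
Your proposal is correct and follows the paper's proof: you derive monotonicity of $\Pi$ from Assumption~\ref{as:repro} via the simulation argument of Lemma~\ref{lem:vorder}, which gives the equivalence with $m_c\ge q$, and summing the per-category requirements over the $M$ contributors gives $Kq\le M$. You spell out more detail than the paper, notably that the converse direction rests on the minimality of $q$ together with $\Pi(0,a)=a<p_*$ for empty categories, rather than on monotonicity.
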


\begin{proof}
Monotonicity of $\Pi$ gives the first claim. Summing the $K$ per-category
requirements over a population of size $M$ gives $Kq\le M$.
\end{proof}

The cardinality cap limits category count but does not certify an arbitrary
partition below the limit. The actual check is that its smallest category also
has at least $q$ members. Service-specific contributor requirements and
family-wise false-alarm control are treated separately in
Appendix~\ref{app:budget}.

At each resolution, the alternative is degradation of the affected category's
own service. It is not a fixed physical subset of incidents carried unchanged
through different partitions.

\subsection{An information envelope}

Let $X=x(T)$. For the finite weighted sources and bounded-density continuous
sources covered in Appendix~\ref{app:floor}, the welfare assumptions give an
explicit source-dependent floor $R_P(\eps)$: every category/profile design with
loss at most $\eps$ must have $H(C)\ge R_P(\eps)$. This floor concerns the source
and tolerance, not the output of one clustering algorithm.
Appendix~\ref{app:floor} states it for finite sources with weights and for
bounded-density continuous sources. Together with~\eqref{eq:OCC} this bounds
the declaration entropy on both sides.

\begin{corollary}[Information envelope]
\label{cor:info}
Every jointly acceptable design satisfies
\begin{equation}
\label{eq:INFO}
  R_P(\eps)\;\le\;H(C)\;\le\;\log_2\lfloor M/q\rfloor .
\end{equation}
\end{corollary}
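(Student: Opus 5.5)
The two inequalities in~\eqref{eq:INFO} come from separate parts of the paper, and I will prove each on its own before combining them. Here a design is \emph{jointly acceptable} when its allocation loss satisfies $D(\calM,c)\le\eps$ and every monitored category reaches power at least $p_*$ at allowance $a$.

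For the lower bound I take the floor $R_P(\eps)$ from Appendix~\ref{app:floor} as given. I will only check that acceptability supplies its hypothesis. Joint acceptability includes $D(\calM,c)\le\eps$, so the design is a category/profile design with loss at most $\eps$ over the same source and weights. The appendix result then gives $H(C)\ge R_P(\eps)$ directly. No verification information is needed for this side. I note that the floor applies to arbitrary profiles, not just mean profiles, because it is stated for all designs with loss at most $\eps$.

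For the upper bound I will use Corollary~\ref{cor:occ}. Every monitored category must reach power $p_*$. Since $\Pi$ is non-decreasing by Assumption~\ref{as:repro}, each category has $m_c\ge q$. Summing over categories gives $Kq\le M$, hence $K\le\lfloor M/q\rfloor$, where $K$ counts the occupied, monitored categories. The declaration $C$ is deterministic in $T$ and takes values only in those occupied categories. Labels with zero probability contribute nothing to entropy, and any category that is required but empty fails $p_*>a$ and so is excluded. The entropy bound for a variable with at most $K$ values then gives $H(C)\le\log_2 K\le\log_2\lfloor M/q\rfloor$, using monotonicity of $\log_2$ and $\lfloor M/q\rfloor\ge K\ge1$.

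Chaining the two bounds gives~\eqref{eq:INFO}. The main obstacle is making the two sides refer to the same categories and the same probability law. $H(C)$ is computed under the welfare weights, while the occupancy count uses the realised monitoring counts $m_c$. I will argue as follows: every category with positive welfare probability is a declared, hence monitored, category, so it is counted in $K$. Any category whose welfare weight is positive but whose realised count is zero would violate the power target. So the support of $C$ has at most $K$ points, and $K$ is bounded by $\lfloor M/q\rfloor$ under the stated convention that $K$ counts the categories required to meet the monitoring target.
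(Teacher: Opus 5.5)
Your proposal is correct and follows the paper's own argument: the left inequality is the source floor of Appendix~\ref{app:floor} applied to any design with $D\le\eps$, and the right inequality combines $H(C)\le\log_2 K$ with the occupancy ceiling of Corollary~\ref{cor:occ}. Your extra step, which aligns the support of $C$ under the welfare law with the occupied, monitored categories, makes explicit the paper's convention that $K$ counts every category required to meet the power target.
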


\begin{proof}
The source floor gives the left inequality. The right follows from
$H(C)\le\log_2 K$ and Corollary~\ref{cor:occ}.
\end{proof}

If the lower information requirement exceeds the population-based upper bound,
no design in the stated class meets both obligations. Overlap provides a
necessary condition, not an attaining design. The next section constructs an
exact feasible range for one finite source and identifies tolerances at which
the information lower bound is attained.

\section{An exact allocation--verification construction}
\label{sec:construction}
We now exhibit one population for which the allocation limit and verification
ceiling are attained by the same category designs. The comparison covers every
partition and every profile choice for that population, not just a selected
refinement hierarchy. The resulting band therefore distinguishes attainable from
impossible category counts. The profiles also support the intended memberships
under participants' own choices, as established in Appendix~\ref{app:construction}.

\subsection{The population and the design}

Let $M$ equally weighted agents have demands $t_i=(i+\tfrac12)/M$ for
$i=0,\dots,M-1$, and let $U(t,r)=1-(t-r)^2$, so $x(t)=t$ and
Assumption~\ref{as:quad} holds with $s=1$. Write
$D=\tfrac1M\sum_i (t_i-r_{c(i)})^2$.

A design partitions the agents into $K$ nonempty categories of sizes
$m_1,\dots,m_K$ and delivers one profile per category. The comparison class
allows \emph{every} partition and \emph{every} choice of profiles, so lower
bounds proved on it apply as a special case to designs whose memberships arise
from participants' own choices.

Putting several distinct preferred allocations into one category forces them to
share one profile. The least costly group of a given size consists of
consecutive demands served at their mean. The next calculation makes that cost
explicit.

\begin{lemma}[Grouping cost]
\label{lem:group}
A category holding $m$ of these demands incurs squared-error sum at least
$m(m^2-1)/(12M^2)$, whatever profile it is given, with equality for $m$
consecutive demands and their mean.
\end{lemma}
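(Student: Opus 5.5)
First fix the profile, then the set of demands. For a category holding a set $S$ of $m$ demands, the squared-error sum $\sum_{i\in S}(t_i-r)^2$ is minimised over $r$ at the mean $\bar t_S$. This is the conditional-mean fact already used in Corollary~\ref{cor:mean}. The bias--variance identity gives
\[
\sum_{i\in S}(t_i-r)^2=\sum_{i\in S}(t_i-\bar t_S)^2+m(r-\bar t_S)^2,
\]
so it suffices to bound the within-set scatter $\sum_{i\in S}(t_i-\bar t_S)^2$ from below. We then check equality for consecutive demands served at their mean.

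Write $t_i=(i+\tfrac12)/M$. The scatter is invariant under translation, so it equals $M^{-2}\sum_{i\in S}(i-\bar\imath_S)^2$, where $\bar\imath_S$ is the mean index. The claim therefore reduces to a purely combinatorial statement. For any $m$ distinct integers $i_1<\dots<i_m$, we want
\[
\sum_k (i_k-\bar\imath)^2\ \ge\ \frac{m(m^2-1)}{12}.
\]
We prove this through the pairwise form $\sum_k(i_k-\bar\imath)^2=\frac1m\sum_{k<l}(i_l-i_k)^2$. Distinctness and ordering give $i_l-i_k\ge l-k$ for $l>k$, so each pairwise term is at least $(l-k)^2$. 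Hence the sum is at least its value for the consecutive set $\{0,\dots,m-1\}$. Equality holds exactly when every gap is $1$, that is, when the indices are consecutive.

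The last step is to compute the consecutive case. This is the variance of a discrete uniform on $m$ points: $\sum_{k=0}^{m-1}(k-\tfrac{m-1}{2})^2=m(m^2-1)/12$. One can verify it via $\sum k^2=(m-1)m(2m-1)/6$ minus $m\bigl(\tfrac{m-1}{2}\bigr)^2$. Dividing by $M^2$ gives $m(m^2-1)/(12M^2)$, with equality attained by consecutive demands at their mean. The only mildly delicate step is the monotonicity of the pairwise form under compressing gaps. The pairwise representation makes this termwise, so I expect no real obstacle. The case $m=1$ gives $0$ and is consistent.
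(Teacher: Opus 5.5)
Your proof is correct and follows the paper's argument essentially step for step: the bias--variance split, the pairwise form $\frac1m\sum_{k<l}(x_l-x_k)^2$, the sorted-gap bound $i_l-i_k\ge l-k$, and attainment by consecutive demands served at their mean. The only cosmetic difference is that you evaluate the consecutive case as the variance of a discrete uniform rather than summing $\sum_{k<l}(l-k)^2$ directly, and by the pairwise identity these are the same quantity.
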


\begin{proof}
For points $x_1,\dots,x_m$ with mean $\bar x$ and any $r$,
$\sum_i(x_i-r)^2=\sum_i(x_i-\bar x)^2+m(\bar x-r)^2\ge
\tfrac1m\sum_{i<j}(x_i-x_j)^2$. Sort the selected grid points. The gap between
the $j$th and $i$th is at least $(j-i)/M$, so the last sum is at least
$\tfrac1{mM^2}\sum_{i<j}(j-i)^2=m(m^2-1)/(12M^2)$. Consecutive points with their
mean attain both steps.
\end{proof}

Summing Lemma~\ref{lem:group} over categories gives, for every design,
\begin{equation}
\label{eq:grouplb}
  D\;\ge\;\frac{\sum_c m_c^3-M}{12M^3}.
\end{equation}

\begin{lemma}[Balanced sizes]
\label{lem:balance}
For integers $b\ge a+2$, $a^3+b^3-\bigl[(a+1)^3+(b-1)^3\bigr]=3(a+b)(b-a-1)>0$.
Hence at fixed $K$ and $M$, $\sum_c m_c^3$ is minimised when the sizes differ by
at most one.
\end{lemma}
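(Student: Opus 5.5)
The plan has two parts: first the cube identity, then an exchange argument that turns it into the minimisation claim.

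\emph{The identity.} Expand $(a+1)^3+(b-1)^3$ directly. The constant terms cancel, leaving
\[
(a+1)^3+(b-1)^3=a^3+b^3+3a^2-3b^2+3a+3b .
\]
The difference is therefore
\[
3(b^2-a^2)-3(a+b)=3(a+b)(b-a-1).
\]
Its positivity follows from $b-a-1\ge1$ and $a+b>0$. Here the sizes are positive integers; the identity itself needs only $a+b>0$, and it also holds with $a=0$ when $b\ge2$.

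\emph{The exchange argument.} Fix $K$ and $M$, and consider the finite set of positive integer vectors $(m_1,\dots,m_K)$ summing to $M$. This set is finite, so $\sum_c m_c^3$ attains a minimum on it. Suppose a minimiser has two sizes with $m_j\ge m_i+2$. Replace the pair $(m_i,m_j)$ by $(m_i+1,m_j-1)$. This preserves the total and keeps every entry positive, since $m_j-1\ge m_i+1\ge 2$. By the identity with $a=m_i$ and $b=m_j$, the sum of cubes strictly decreases, which contradicts minimality. Hence every minimiser has all sizes within one of each other.

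\emph{Uniqueness of the balanced vector.} Sizes differing by at most one and summing to $M$ are determined up to order. Write $M=Kq+\rho$ with $0\le\rho<K$. Then exactly $\rho$ sizes equal $q+1$ and $K-\rho$ equal $q$. So every minimiser has the same value, namely the balanced one, which gives the claim.

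No step is a real obstacle. The only care points are the sign of $b-a-1$, checking that the transfer keeps the sizes positive, and invoking finiteness, or alternatively the strict decrease of the nonnegative integer quantity $\sum_c m_c^3$, so that repeated transfers terminate.
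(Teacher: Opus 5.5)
Your proof is correct and is the argument the paper intends: the paper gives no separate proof, leaving the identity to direct expansion and the ``hence'' to exactly your unit-transfer exchange against a minimiser of $\sum_c m_c^3$ over the finite set of size vectors, together with uniqueness of the balanced vector up to order. Your remark that the strict inequality needs $a+b>0$ is a useful precision, since the paper's phrase ``for integers'' would otherwise admit counterexamples such as $a=-5$, $b=-3$; note only that it is the positivity, not the identity itself, that requires this.
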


Take $K$ consecutive groups with sizes differing by at most one, and deliver each
group's mean. This attains~\eqref{eq:grouplb}, so it is a global minimum over all
partitions and all profiles, not only over some fixed hierarchy. Write $D_K^*$
for that minimum. When $K\mid M$,
\begin{equation}
\label{eq:Dstar}
  D_K^*=\frac1{12K^2}-\frac1{12M^2},
\end{equation}
and the adjacent-size formula for the remaining $K$ is in
Appendix~\ref{app:construction}.

These memberships can be implemented by category choice. The boundary between
adjacent mean profiles lies between the corresponding neighbouring demand points,
so every participant prefers its own group's profile. Appendix~\ref{app:construction}
gives the boundary calculation. No redistribution of participants after their
choices is needed.

\subsection{The exact band}

Fix a welfare tolerance $\eps$ and the monitoring requirement $q\le M$ of
Corollary~\ref{cor:occ}.

\begin{theorem}[Exact existence band]
\label{thm:exactband}
A deterministic $K$-category design meeting the welfare tolerance and the
every-category power target exists if and only if
\begin{equation}
\label{eq:EXACTBAND}
  K_W(\eps)\;\le\;K\;\le\;\lfloor M/q\rfloor,
  \qquad K_W(\eps)=\min\{K:D_K^*\le\eps\}.
\end{equation}
At every $K$ in that range the nearly balanced consecutive centroid design is
globally welfare optimal, meets the power target, and places every agent nearest
its own group's profile, so its designed occupancies are the occupancies that
choice produces.
\end{theorem}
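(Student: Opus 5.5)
The plan is to prove the two directions separately and then verify the three properties of the nearly balanced consecutive centroid design. The argument relies on three earlier facts. Lemma~\ref{lem:group} and Lemma~\ref{lem:balance} identify $D_K^*$ as the global minimum of $D$ over all $K$-category designs. Corollary~\ref{cor:occ} characterises the power target through counts: every category meets the target exactly when $\min_c m_c\ge q$. A preliminary step is to show that $D_K^*$ is non-increasing in $K$ for $K\le M$. Given any optimal $K$-design with $K<M$, some category has at least two members, and splitting it cannot raise the optimal loss: by Corollary~\ref{cor:refine}, or directly, a cell-constant rule can retain the parent profile. Hence $D_{K+1}^*\le D_K^*$. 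Consequently $\{K:D_K^*\le\eps\}$ is an upper set, namely $\{K\ge K_W(\eps)\}$, whenever it is nonempty.

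For necessity, take any deterministic $K$-category design meeting both requirements. Its loss satisfies $D\ge D_K^*$ by \eqref{eq:grouplb} and Lemma~\ref{lem:balance}. Therefore $D_K^*\le\eps$, which gives $K\ge K_W(\eps)$. The power target applies to every category, so by Corollary~\ref{cor:occ} each $m_c\ge q$. Summing over the categories gives $Kq\le M$, that is, $K\le\lfloor M/q\rfloor$.

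For sufficiency, fix $K$ in the range and take $K$ consecutive groups whose sizes differ by at most one, each served at its mean. This design attains \eqref{eq:grouplb}, so its loss equals $D_K^*\le\eps$ by monotonicity, and it is globally welfare optimal. The smallest group has size $\lfloor M/K\rfloor$. Since $K\le\lfloor M/q\rfloor\le M/q$, we have $M/K\ge q$, and because $q$ is an integer, $\lfloor M/K\rfloor\ge q$. Corollary~\ref{cor:occ} then gives the power target in every category. The same inequality shows that every group is nonempty, since $q\ge1$.

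The remaining claim, which I expect to be the only nonroutine step, is the choice property. Adjacent groups $G_k$ and $G_{k+1}$ have means $r_k<r_{k+1}$. Under quadratic utility, an agent prefers $r_k$ exactly when its demand lies below the midpoint $(r_k+r_{k+1})/2$. Let the groups have sizes $m,m'\in\{n,n+1\}$, and let $u$ be the last demand in $G_k$, so that $u+1/M$ is the first demand in $G_{k+1}$. Then $r_k=u-(m-1)/(2M)$ and $r_{k+1}=u+1/M+(m'-1)/(2M)$. The midpoint is therefore $u+(m'-m+2)/(4M)$. Since $|m'-m|\le1$, this lies strictly between $u$ and $u+1/M$, and in particular at least $u+1/(4M)$. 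Every member of $G_k$ thus strictly prefers $r_k$ to $r_{k+1}$, and symmetrically for $G_{k+1}$. Because the profiles are ordered and preferences are single-peaked in distance, preferring one's own profile to both neighbours implies preferring it to all profiles. Ties never occur, so nearest-profile choice reproduces the designed occupancies, consistent with Proposition~\ref{prop:gain}.
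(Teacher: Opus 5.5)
Your proof is correct and follows the paper's argument. Necessity comes from the global lower bound $D\ge D_K^*$ together with the occupancy ceiling $Kq\le M$, and sufficiency from the nearly balanced consecutive centroid design, whose smallest group satisfies $\lfloor M/K\rfloor\ge q$. The only difference is that you prove inline two facts the paper cites: the monotonicity of $D_K^*$ in $K$ (shown in strict form in the appendix, and used implicitly in the paper's proof to get $D_K^*\le\eps$), and the midpoint calculation of Lemma~\ref{lem:selfsel}.
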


\begin{proof}
Necessity. Any $K$-category design has loss at least $D_K^*$, so $K<K_W$ fails
welfare. Every monitored category needs $q$ agents, so $K>\lfloor M/q\rfloor$
fails monitoring by Corollary~\ref{cor:occ}. Achievability. For $K$ in the range
the balanced design has loss $D_K^*\le\eps$, its smallest group has
$\lfloor M/K\rfloor\ge q$ members, and Lemma~\ref{lem:selfsel} gives zero
declaration gain, so those occupancies are the ones choice realises.
\end{proof}

The quantifiers matter. The statement is that some design attains the targets at
every $K$ in the band, and that no design attains them outside it. It does not
say every design inside the band works.

\subsection{Two consequences}

\begin{corollary}[Verification enforces a positive welfare floor]
\label{cor:floor}
Every verifiable design has $D\ge (q^2-1)/(12M^2)$, with equality when $q\mid M$
using consecutive $q$-member groups. The exact global minimum verifiable loss is
$D^*_{\lfloor M/q\rfloor}$. For $q>1$ it is strictly positive, so joint
perfection is impossible.
\end{corollary}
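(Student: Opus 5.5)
The plan is to combine the occupancy requirement with the grouping-cost bound~\eqref{eq:grouplb} and then minimise over the admissible size vectors. First, verifiability means that every occupied category meets the power target. By Corollary~\ref{cor:occ} this holds exactly when every category has $m_c\ge q$.

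Next I bound $\sum_c m_c^3$ from below using only $m_c\ge q$ and $\sum_c m_c=M$. Since $m_c^3\ge q^2 m_c$ whenever $m_c\ge q\ge 1$, summing gives $\sum_c m_c^3\ge q^2M$. Substituting into~\eqref{eq:grouplb} yields
\[
D\;\ge\;\frac{q^2M-M}{12M^3}=\frac{q^2-1}{12M^2}.
\]
Equality requires every $m_c=q$, which needs $q\mid M$. In that case the consecutive $q$-member groups at their means attain~\eqref{eq:grouplb} by Lemma~\ref{lem:group}, matching~\eqref{eq:Dstar} with $K=M/q$.

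For the exact minimum, observe that any verifiable design has some $K\le\lfloor M/q\rfloor$, so its loss is at least $D_K^*$. The sequence $D_K^*$ is non-increasing in $K$. To see this, take the balanced $K$-design and split one group of size at least $2$ into two consecutive pieces. The loss cannot rise, by Corollary~\ref{cor:refine} together with $D=s\,e$ from Corollary~\ref{cor:mean}. The new design has loss at least $D_{K+1}^*$, so $D_{K+1}^*\le D_K^*$. Such a group exists whenever $K<M$, and for $K\le\lfloor M/q\rfloor$ with $q\ge1$ we have $K\le M$. Hence every verifiable design has loss at least $D^*_{\lfloor M/q\rfloor}$.

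The bound is attained by Theorem~\ref{thm:exactband}'s achievability part. The balanced consecutive design with $\lfloor M/q\rfloor$ groups has smallest group $\lfloor M/\lfloor M/q\rfloor\rfloor\ge q$, so it is verifiable and has loss $D^*_{\lfloor M/q\rfloor}$.

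Positivity for $q>1$ follows from the first bound, since $(q^2-1)/(12M^2)>0$. Because oracle loss is zero, joint perfection (zero loss together with verifiability) is impossible.

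The only delicate points are checking the monotonicity of $D_K^*$ and the attaining inequality $\lfloor M/\lfloor M/q\rfloor\rfloor\ge q$. The latter holds because $\lfloor M/q\rfloor\le M/q$ gives $M/\lfloor M/q\rfloor\ge q$, and $q$ is an integer.
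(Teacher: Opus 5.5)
Your proof is correct and follows essentially the paper's argument. Your bound $\sum_c m_c^3\ge q^2M$ in~\eqref{eq:grouplb} is the paper's per-member averaging of $(m_c^2-1)/(12M^2)\ge(q^2-1)/(12M^2)$; you then combine the occupancy ceiling, monotonicity of $D_K^*$ by splitting a group, and achievability from Theorem~\ref{thm:exactband}, differing only cosmetically in using non-strict monotonicity via Corollary~\ref{cor:refine} and taking positivity from the first bound, where the paper invokes the strict decrease of $D_K^*$ below $K=M$.
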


\begin{proof}
Lemma~\ref{lem:group} makes the average loss inside a size-$m$ category at least
$(m^2-1)/(12M^2)$, and every verifiable category has $m\ge q$. Averaging over
membership gives the bound. The exact minimum follows from the occupancy ceiling,
the strict decrease of $D_K^*$ below $K=M$, and Theorem~\ref{thm:exactband}.
\end{proof}

\begin{corollary}[Entropy floor]
\label{cor:entropy}
Let $p_c=m_c/M$. Every design on this population satisfies
\begin{equation}
\label{eq:entlb}
  D\;\ge\;\frac1{12}\Bigl(2^{-2H(C)}-M^{-2}\Bigr).
\end{equation}
\end{corollary}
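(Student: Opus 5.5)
Start from the group bound \eqref{eq:grouplb}, $D\ge(\sum_c m_c^3-M)/(12M^3)$, which holds for every partition and every profile choice. Writing $p_c=m_c/M$, it becomes
\[
  D\;\ge\;\frac1{12}\Bigl(\sum_c p_c^3-M^{-2}\Bigr).
\]
It then suffices to show $\sum_c p_c^3\ge 2^{-2H(C)}$ for any probability vector $(p_c)$. Since the declarations are deterministic, $H(C)=-\sum_c p_c\log_2 p_c$ is exactly the entropy of the category sizes.

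For the key inequality, apply Jensen's inequality to the convex function $\exp$, or equivalently the weighted AM--GM inequality, with weights $p_c$:
\[
  \sum_c p_c\,p_c^2\;\ge\;\prod_c (p_c^2)^{p_c}
  \;=\;2^{2\sum_c p_c\log_2 p_c}\;=\;2^{-2H(C)}.
\]
This is the standard fact that the Rényi entropy of order $3$ does not exceed the Shannon entropy. Categories with $p_c=0$ do not arise, because designs use nonempty categories; in any case they contribute nothing to either side. Substituting this into the displayed bound gives \eqref{eq:entlb}.

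No step is a real obstacle. The only care needed is to confirm that \eqref{eq:grouplb} applies to every design, not just consecutive ones, and this is guaranteed because Lemma~\ref{lem:group} holds whatever profile is assigned. I would also check tightness. For $K$ equal groups, $p_c=1/K$, $H=\log_2K$ and $2^{-2H}=K^{-2}$, so the bound matches \eqref{eq:Dstar}. This shows that the balanced consecutive designs attain the information floor whenever $K\mid M$.
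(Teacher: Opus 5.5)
Your proposal is correct and follows the paper's proof essentially step for step: you rewrite \eqref{eq:grouplb} as $12D\ge\sum_c p_c^3-M^{-2}$, then apply weighted AM--GM to the numbers $p_c^2$ with weights $p_c$ to get $\sum_c p_c^3\ge 2^{-2H(C)}$. Your tightness check at $K\mid M$ also matches the paper's equality-case discussion in Appendix~\ref{app:construction}.
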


\begin{proof}
Rewrite~\eqref{eq:grouplb} as $12D\ge\sum_c p_c^3-M^{-2}$. The weighted
arithmetic--geometric mean inequality applied to the numbers $p_c^2$ with weights
$p_c$ gives $\sum_c p_c^3=\sum_c p_c\,p_c^2\ge\prod_c (p_c^2)^{p_c}=2^{-2H(C)}$.
\end{proof}

This is a genuine converse over all partitions, with unequal category
probabilities permitted. It is not obtained by substituting a lower bound on $K$
for a lower bound on $H(C)$. At a balanced operating point it is tight. If
$K\mid M$ and $\eps=D_K^*$, then~\eqref{eq:entlb} forces $H(C)\ge\log_2 K$, and
the balanced $K$-category design attains exactly $\log_2 K$ with zero declaration
gain. When $M/K\ge q$ the same design also meets verification. So $\log_2 K$ is
the minimum category entropy across all categorical designs meeting both targets,
not merely the encoding cost of a rule chosen in advance.

\subsection{The worked numbers}
\label{sec:worked}

Take $M=96$, one independent binary failure observation per agent over a fixed
period, $p_0=1/5$ under fulfilled service and $p_1=4/5$ under a category-wide
degradation, per-category allowance $a=1/10$ and target $p_*=4/5$. Exact
likelihood-ratio ordering with randomisation at the boundary count gives optimal
powers $2/5$, $7/10$ and $22/25$ at $m=1,2,3$, so $q=3$ and
$\lfloor M/q\rfloor=32$.

At $\eps=0.002$ we have $D_6^*=85/36864\approx 0.0023058$ and
$D_7^*=1001/589824\approx 0.0016971$, so $K_W=7$ and the band
of~\eqref{eq:EXACTBAND} is exactly $\{7,\dots,32\}$. Table~\ref{tab:band}
displays the endpoints and three interior points.

\begin{table}[t]
\centering
\caption{The exact band on the 96-agent construction at $\eps=0.002$, $q=3$.
Loss is the global minimum $D_K^*$ over all partitions and profiles. Power is
the optimal count-test power at the design's smallest category.}
\label{tab:band}
\begin{tabular}{rrrrl}
\toprule
$K$ & smallest pool & $D_K^*$ & power & status \\
\midrule
 6 & 16 & 0.00230577 & 0.99997186 & welfare fails \\
 7 & 13 & 0.00169712 & 0.99983484 & feasible \\
 8 & 12 & 0.00129304 & 0.99952596 & feasible \\
16 &  6 & 0.00031648 & 0.98311000 & feasible \\
32 &  3 & 0.00007234 & 0.88000000 & feasible \\
33 &  2 & 0.00006951 & 0.70000000 & verification fails \\
\bottomrule
\end{tabular}
\end{table}

The minimum verifiable loss is $D_{32}^*=1/13824\approx 7.2338\times10^{-5}$, the
positive floor of Corollary~\ref{cor:floor}. At $\eps=0.002$ the envelope
of~\eqref{eq:INFO} reads $2.68716\le H(C)\le 5$, and the seven-category design
uses $2.80656$ bits with five groups of $14$ and two of $13$.

The information lower bound is attained at a tighter allocation tolerance. At
$\eps=D_8^*=143/110592$, the quantity $12\eps+96^{-2}$ equals $1/64$, so the
entropy-floor inequality requires at least three bits. The balanced
eight-category design uses exactly three bits and meets the verification target.
Three bits of category entropy are therefore necessary and sufficient at this
specified pair of targets. The upper entropy limit remains five bits, so
attainment of the lower bound does not collapse the whole information envelope to
a single point. Allocation losses and count-test powers were evaluated in exact
rational arithmetic. Entropy values were evaluated from the exact category
probabilities using base-two logarithms. The three-bit equality follows
algebraically.

\section{Numerical illustration}
\label{sec:numerical}
The exact construction uses a uniform scalar population. We also illustrate
allocation loss for a synthetic population with four-dimensional requirements.
Each of ten runs, using seeds 1--10, draws 50,000 demand vectors from a mixture
of five service types, with coordinates normalised to $[0,1]$. The utility is
$U(t,r)=-\|t-r\|^2$, so allocation loss equals squared error directly for every
delivered profile.

At $K=3$, the service types are grouped by resource intensity and each group
receives its empirical mean. At $K=10$ and $K=30$, the categories are fitted by
$k$-means and participants receive the profiles returned by the fitted model.
Table~\ref{tab:numerical} reports absolute allocation loss and that loss divided
by total demand variance. The latter ratio is a scale-normalised squared error,
not a ratio to oracle welfare. Oracle welfare is zero for the unshifted utility
used here.

\begin{table}[t]
\centering
\caption{Ten runs of $50{,}000$ synthetic demand vectors. $D$ is mean squared
error against the delivered profiles and equals absolute allocation loss for the
stated utility. $D/\mathrm{Var}(T)$ is shown with one standard deviation across
runs. The gain columns evaluate alternatives within the same fixed menu and serve
as a check on Proposition~\ref{prop:gain}. Across the ten runs, $D$ and mean~$g$
are reported as the mean of the per-run values; max~$g$ is the overall maximum
across all runs.}
\label{tab:numerical}
\begin{tabular}{clrlrr}
\toprule
$K$ & category rule & $D$ & $D/\mathrm{Var}(T)$ & mean $g$ & max $g$ \\
\midrule
 3 & semantic       & 0.18625 & $0.5616\pm0.0019$ & $1.93\times10^{-2}$ & $8.99\times10^{-1}$ \\
10 & demand-derived & 0.04200 & $0.1266\pm0.0007$ & $0$ & $0$ \\
30 & demand-derived & 0.01647 & $0.0497\pm0.0005$ & $0$ & $0$ \\
\bottomrule
\end{tabular}
\end{table}

For run $b$, the demand variance is
\[
  V_b=\frac1N\sum_{i=1}^N\|t_{bi}-\bar t_b\|^2,
\]
and the reported normalised quantity is computed per run as $D_b/V_b$. The mean
and standard deviation of this ratio across the ten runs are reported.

Loss is lower for the two fitted demand-derived designs, and lower at 30 than at
10 categories. These are observed results for the stated fitting procedure. The
fitted partitions are not established as a nested sequence, so their ordering is
not an application of the refinement corollary. Nor does the comparison with
three semantic groups isolate category construction at a matched value of $K$.

The gain columns provide a supporting check on Proposition~\ref{prop:gain}. Gain
is computed against the same fixed profiles used for delivery. It is zero within
the stated tolerance for all evaluated participants at $K=10$ and $K=30$. The
semantic comparator has positive mean gain $1.93\times10^{-2}$, below its
allocation loss as the proposition requires. Appendix~\ref{app:repro} gives the
distance-based evaluation and numerical convention.

Returned profiles need not equal the empirical means of their final memberships.
The profile-to-mean error decomposition in Appendix~\ref{app:repro} accounts
for this distinction. Its weighted offset is $1.1\times10^{-7}$ at $K=10$ and
$1.7\times10^{-7}$ at $K=30$, and zero for the mean-profile semantic comparator.
All reported allocation errors use the profiles actually delivered.

No fulfilment experiment is run on this synthetic population. The exact
construction supplies the worked verification band. Table~\ref{tab:numerical}
illustrates the allocation quantities and the supporting declaration condition.

\section{Related work}
\label{sec:related}
\paragraph{Allocation and quantisation.}
Assigning one representative profile to a category is a quantisation problem.
Nearest-neighbour assignment and centroid representatives are associated with
Lloyd's construction \citep{lloyd1982}, and \citet{gray_neuhoff1998} survey
quantisation and distortion. \citet{shannon1948} and \citet{berger1971} provide
the rate-distortion background. Under common quadratic utility, squared-error
optimisation also optimises the allocation loss used here. Under general smooth
utility that equivalence is not guaranteed, which is why the Welfare Bound
retains separate curvature bounds. Appendix~\ref{app:floor} derives the
source-dependent entropy bounds used in the information envelope.

\paragraph{Aggregate verification.}
Blackwell's comparison of experiments \citep{blackwell1953} supplies the
state-independent simulation argument behind Aggregate Dominance.
\citet{neyman_pearson1933} give the optimal-testing principle for the
binary construction, with the randomised-test formulation as in
\citet{lehmann_romano2005}. The pool-size comparison applies the same
simulation principle to larger and smaller aggregate records. This is the
step that turns a category-level detection requirement into a population
requirement.

\paragraph{Declarations and finite messages.}
Menu choice by an informed participant is studied in delegation
\citep{khodabakhsh2024}, while bounded-communication mechanisms explicitly
constrain the message space \citep{bns2007}. Proposition~\ref{prop:gain} supplies
only the declaration-gain condition that the present fixed-profile mechanism
needs.

\paragraph{The shared category system.}
Task-oriented compression designs representations for a downstream objective
\citep{shlezinger2019,tishby1999}. The setting here assigns the category system
two linked roles. A declaration selects an allocation, and category membership
supplies the population whose outcomes support verification. The paper
establishes the consequences of that coupling under stated premises, including a
construction-relative band, general necessary bounds, and a finite setting with
an exact existence band and attained information minima.

\section{Conclusion}
\label{sec:conclusion}
Categories used for allocation also determine the evidence available for
category-level verification. Under the stated refinement and observation
conditions, finer categories reduce allocation loss while weakening the
guaranteed power of the category tests. The acceptable resolutions therefore
form a feasibility band. Its coarsest allocation-acceptable point provides a
direct feasibility check: if that point cannot be verified at the required power,
no finer point in the same construction can satisfy both requirements.

A source-dependent information floor and a minimum contributor requirement
express the two constraints separately. They provide an impossibility
certificate when their requirements do not overlap. The uniform finite
construction goes further by establishing the exact feasible range across all
categorical designs for its population and attaining the information lower bound
at specified tolerances. It also shows how a monitoring requirement can impose a
strictly positive allocation loss.

These conclusions concern fixed profiles, a stated monitoring period, and
explicitly comparable observation laws. Within that setting, choosing categories
only for allocation fit can produce services that cannot be verified with the
available evidence. Category design must therefore account for both the
distinctions needed to allocate and the populations needed to verify.

\appendix

\section{Aggregate observation}
\label{app:aggdom}
Theorem~\ref{thm:aggdom} applies whenever the stated simulation relation holds.
This appendix evaluates the comparison in the Gaussian additive model, giving
exact powers, strictness conditions, and a finite-sample bound on the power gap.
These quantitative results specialise the general theorem rather than supply a
missing part of its proof.

\subsection{Scalar case}

Assume independent periods $i=1,\dots,n$, and for fixed $\sigma>0$ and
$\delta>0$,
\[
  A_i\mid H_0\sim\mathcal N(\nu,\sigma^2),
  \qquad
  A_i\mid H_1\sim\mathcal N(\nu-\delta,\sigma^2).
\]
Let $\xi_{ij}$ be independent $\mathcal N(0,\tau^2)$, independent of $A$ and of
the state, with $m$ agents per period and $\tau\ge0$. The two grand means have
variances $\sigma^2/n$ and $(\sigma^2+\tau^2/m)/n$ and the same mean shift
$\delta$. In this model the within-period residuals $Y_{ij}-\bar Y_i$ are
independent of the period means with a state-independent law, so the individual
grand mean attains the best individual-only test and the comparison is not
weakened artificially.

With $z_{1-a}=\Phi^{-1}(1-a)$ for $0<a<1$, the noncentralities and exact optimal
powers are
\[
  \lambda_{\mathrm{agg}}=\frac{\delta\sqrt n}{\sigma},
  \qquad
  \lambda_{\mathrm{flow}}=\frac{\delta\sqrt n}{\sqrt{\sigma^2+\tau^2/m}},
  \qquad
  \pi_\bullet=\Phi(\lambda_\bullet-z_{1-a}).
\]
So $\pi_{\mathrm{agg}}\ge\pi_{\mathrm{flow}}$, strictly when $\delta>0$,
$\tau^2>0$, $m$ finite and $0<a<1$. Equality holds at $\tau^2=0$, and the gap
tends to zero as $m$ grows with the other parameters fixed. At $\delta=0$ both
powers equal $a$.

\subsection{\texorpdfstring{A finite-sample $O(1/m)$ bound}{A finite-sample O(1/m) bound}}

For $u\ge0$,
\[
  \frac1\sigma-\frac1{\sqrt{\sigma^2+u}}
  =\frac{u}{\sigma\sqrt{\sigma^2+u}\,\bigl(\sigma+\sqrt{\sigma^2+u}\bigr)}
  \;\le\;\frac{u}{2\sigma^3}.
\]
Put $u=\tau^2/m$. The normal density is at most $1/\sqrt{2\pi}$, so
\begin{equation}
\label{eq:gap1m}
  0\;\le\;\pi_{\mathrm{agg}}-\pi_{\mathrm{flow}}
  \;\le\;\frac{\delta\sqrt n\,\tau^2}{2\sqrt{2\pi}\,\sigma^3\,m}.
\end{equation}
This is a finite-$m$ inequality, not an asymptotic expansion. It gives the
$O(1/m)$ rate at fixed $n$, $\delta$, $\sigma$ and $\tau$. It does not assert a
uniform rate when those parameters vary with $m$ or with category refinement.

\subsection{Vector case}

As in the scalar case, assume independent Gaussian aggregate observations with a
common positive-definite covariance $\Sigma$ under the two service states and
mean difference $\delta$. Let the independent Gaussian individual noise have
covariance $\Sigma_\xi\succeq0$. Then
\[
  \lambda_{\mathrm{agg}}^2=n\,\delta^\top\Sigma^{-1}\delta,
  \qquad
  \lambda_{\mathrm{flow}}^2
  =n\,\delta^\top\bigl(\Sigma+\Sigma_\xi/m\bigr)^{-1}\delta,
\]
with exact powers $\Phi(\lambda_\bullet-z_{1-a})$ as before. Setting
$v=\Sigma^{-1/2}\delta$ and $G=\Sigma^{-1/2}\Sigma_\xi\Sigma^{-1/2}$,
\[
  \lambda_{\mathrm{agg}}^2-\lambda_{\mathrm{flow}}^2
  =\frac nm\,v^\top G\Bigl(I+\frac Gm\Bigr)^{-1}v\;\ge\;0,
  \qquad
  \lambda_{\mathrm{agg}}^2-\lambda_{\mathrm{flow}}^2\le\frac nm\,v^\top Gv,
\]
and for $\delta\neq0$,
\[
  0\;\le\;\pi_{\mathrm{agg}}-\pi_{\mathrm{flow}}
  \;\le\;\frac{n\,v^\top Gv}{\sqrt{2\pi}\,\lambda_{\mathrm{agg}}\,m}.
\]
Equality at finite $m$ holds exactly when $Gv=0$, that is
$\Sigma_\xi\Sigma^{-1}\delta=0$. So strictness depends on whether the individual
noise touches the detection direction. Noise entirely orthogonal to it costs
nothing.

The comparison in this appendix holds $n$ and $m$ fixed. Any substitution that
makes $m$ depend on the category count belongs to
Section~\ref{sec:budget}, where it is derived from the occupancy of the actual
construction, and it is not inferred from Theorem~\ref{thm:aggdom}.

\section{Feasibility refinements}
\label{app:budget}
\subsection{The count-thinning law}

Section~\ref{sec:budget} uses a binary aggregate experiment to show that
Assumption~\ref{as:repro} has non-Gaussian realisations. Here is the conditional
law it relies on. For $0\le s\le m$ and
$0\le m'\le m$, the number $J$ of failures in a uniformly selected
$m'$-subset of $m$ markers satisfies
\begin{equation}
\label{eq:hyper}
  \Pr(J=j\mid S_m=s)
  =\frac{\binom{s}{j}\binom{m-s}{m'-j}}{\binom{m}{m'}},
\end{equation}
with impossible combinations assigned zero. The rule uses no $p$. If
$S_m\sim\mathrm{Binomial}(m,p)$, then $J\sim\mathrm{Binomial}(m',p)$. So the
larger aggregate count generates the smaller aggregate count under both service
states, without observing identities and without knowing which state holds.
Independent repeated periods are handled by applying the rule to each period
separately.

The verifier never holds the individual records in this construction. It holds
$m$ and $S_m$, and~\eqref{eq:hyper} operates on those two numbers alone.

\subsection{Exact optimal count-test powers}

Fix $p_0<p_1$ and a per-category allowance $a$. The likelihood ratio of the $p_1$
law to the $p_0$ law is increasing in the count, so an optimal test rejects at
the largest counts first and randomises at the boundary count to spend the
allowance exactly. This is the Neyman--Pearson construction
\citep{neyman_pearson1933} for a finite sample space.

For the worked construction of Section~\ref{sec:worked}, with $p_0=1/5$,
$p_1=4/5$ and $a=1/10$, the arithmetic is exact. At $m=2$, reject at count $2$
and with probability $3/16$ at count $1$:
\[
  \text{size}=\frac1{25}+\frac3{16}\cdot\frac8{25}=\frac1{10},
  \qquad
  \text{power}=\frac{16}{25}+\frac3{16}\cdot\frac8{25}=\frac7{10}.
\]
At $m=3$, reject at count $3$ and with probability $23/24$ at count $2$:
\[
  \text{size}=\frac1{125}+\frac{23}{24}\cdot\frac{12}{125}=\frac1{10},
  \qquad
  \text{power}=\frac{64}{125}+\frac{23}{24}\cdot\frac{48}{125}=\frac{22}{25}.
\]
At $m=1$ the power is $2/5$. So a target of $p_*=4/5$ fails at two contributors
and is met at three, giving $q=3$. The powers at $m=4,5,6$ are $223/250$,
$4763/5000$ and $98311/100000$.

\subsection{Unequal categories and service-specific requirements}

Corollary~\ref{cor:occ} does not need equal occupancies. With a common
requirement $q$, nearly equal category sizes make $Kq\le M$ sufficient, since
their minimum size is $\lfloor M/K\rfloor$. With service-specific requirements
$q_c$, adequacy must instead be checked as $m_c\ge q_c$ for every category;
$\sum_c q_c\le M$ alone does not certify a particular assignment. A design below
the cap can still fail. In one worked case with $M=32$ and $q=3$, a nine-category
design has smallest pool $2$ and fails, although the cap is
$\lfloor 32/3\rfloor=10$.

\subsection{Family-wise allowances}

The convention throughout is per-category false-alarm control. To hold a total
family-wise allowance $\alpha$ instead, one conservative rule is
$a_j=\alpha/K_j$. For an existing category that allowance only tightens as the
menu grows, so the ordering proof of Lemma~\ref{lem:vorder} is unchanged, and no
independence across category alarms is needed for that union bound. The required
count then becomes $q(K)$ and the necessary condition is $K\,q(K)\le M$. The
fixed-$a$ value of $q$ from Section~\ref{sec:worked} does not carry over to a
family-wise convention.

\subsection{Necessary and sufficient welfare endpoints}

Where the exact loss $D_j$ is known, the endpoint $j_W$ of
Theorem~\ref{thm:budget} is exact, and this is the case under
Assumption~\ref{as:quad}, where $D=s\,e$. Where only the curvature sandwich of
Corollary~\ref{cor:mean} is available,
\[
  \frac L2 e_j\le\eps
  \quad\text{is sufficient},
  \qquad
  \frac\mu2 e_j\le\eps
  \quad\text{is necessary},
\]
and intersecting either with the verification condition gives respectively an
inner certificate and an outer restriction on the band. Neither should be called
the exact feasible set. For a menu whose profiles are not the conditional means,
use the actual squared allocation error of~\eqref{eq:eR} and not the
within-category variance.

\subsection{A menu-extension construction}

Lemma~\ref{lem:worder} and Lemma~\ref{lem:vorder} are stated for nested
refinements with mean profiles. A different construction inside the same
declare, provision and verify architecture satisfies
Theorem~\ref{thm:budget} without nesting or balance, and it is recorded here
because it makes clear how little the band theorem requires.

Let the menu grow, $\calR_0\subseteq\calR_1\subseteq\cdots\subseteq\calR_J$, with
every old profile retained on unchanged terms, and let each agent select a
utility-maximising offered profile under a fixed tie rule that preserves the
relative priority of the old profiles. Suppose every offered profile stays
occupied. Then three things hold at every level. Declaration gain is zero
pointwise, directly by Proposition~\ref{prop:gain}. Welfare loss is
non-increasing, because every old choice remains available so the pointwise
maximum cannot fall. And for each retained profile the category can only shrink,
because an agent that did not choose it before cannot begin choosing it when only
alternatives are added, its old choice being still available with the same
utility and tie priority. Applying Assumption~\ref{as:repro} to those shrinking
pools, and noting that the least-detectable category of the coarser level is
still monitored at the finer one,
gives $\pi_{j+1}\le\pi_j$ exactly as in Lemma~\ref{lem:vorder}.

Retaining old profiles does not keep them equal to the means of their current
members, and this construction is not claimed globally optimal at any
cardinality. It is a sufficient family for Theorem~\ref{thm:budget}, and it is
not a description of how a fitted menu is produced.

\section{The 96-agent construction}
\label{app:construction}
\subsection{\texorpdfstring{The optimum at every integer $K$}{The optimum at every integer K}}

Section~\ref{sec:construction} gives the closed form~\eqref{eq:Dstar} for
$K\mid M$. For general $K$, write
\[
  a_K=\lfloor M/K\rfloor,\qquad r_K=M-K a_K,
\]
so the balanced construction has $r_K$ groups of size $a_K+1$ and $K-r_K$ groups
of size $a_K$. By Lemma~\ref{lem:group} and Lemma~\ref{lem:balance} the exact
global minimum is
\begin{equation}
\label{eq:DstarK}
  D_K^*=\frac{(K-r_K)\bigl(a_K^3-a_K\bigr)
              +r_K\bigl[(a_K+1)^3-(a_K+1)\bigr]}{12M^3}.
\end{equation}
At $r_K=0$ this reduces to~\eqref{eq:Dstar}.

$D_K^*$ decreases strictly for $K<M$. For $K<M$, an optimal consecutive
construction has a group containing at least two distinct points. Splitting that
group into two nonempty consecutive groups and using their separate means
produces a candidate with loss $D_{\rm split}<D_K^*$. Hence
$D_{K+1}^*\le D_{\rm split}<D_K^*$. This compares optimum values across $K$. It
does not require the optimal partitions at successive $K$ to be nested, and in
general they are not. Their smallest occupancies $\lfloor M/K\rfloor$ are
non-increasing, which is what Corollary~\ref{cor:occ} uses.

\subsection{The self-selection boundary}

\begin{lemma}[Self-selection]
\label{lem:selfsel}
In the nearly balanced consecutive construction, every agent is strictly nearest
its own group's profile, so $g(t_i)=0$ for every $i$, at every integer
$1\le K\le M$.
\end{lemma}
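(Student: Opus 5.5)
The plan is to compute the group means explicitly and show that the midpoint between adjacent means lies strictly between the two neighbouring demand points. By Proposition~\ref{prop:gain}, under the common quadratic utility $U(t,r)=1-(t-r)^2$ zero gain is the same as being nearest one's own profile. So it suffices to prove strict nearest-profile assignment.

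First I fix notation. Label the groups left to right as $G_1,\dots,G_K$, with $G_k$ the consecutive indices $\{s_k,\dots,s_k+m_k-1\}$ and sizes $m_k\in\{a_K,a_K+1\}$. Then $r_k$ is the mean of those grid points, namely
\[
r_k=\frac{s_k+(m_k-1)/2+\tfrac12}{M}.
\]
The profiles are strictly increasing in $k$. In one dimension, an agent at $t$ strictly prefers $r_k$ to every other profile exactly when $t$ lies strictly between the adjacent midpoints $(r_{k-1}+r_k)/2$ and $(r_k+r_{k+1})/2$. Because $|t-r|$ is unimodal along the ordered profiles, comparing with the neighbours suffices; for $k=1$ or $k=K$ only one side applies.

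Next I take adjacent groups $G_k$ and $G_{k+1}$, with $s_{k+1}=s_k+m_k$. The last demand of $G_k$ is $t_{\rm last}=(s_k+m_k-\tfrac12)/M$, and the first demand of $G_{k+1}$ is $t_{\rm first}=(s_k+m_k+\tfrac12)/M$. A short computation gives
\[
\frac{r_k+r_{k+1}}{2}=\frac{s_k+m_k+\tfrac12+(m_{k+1}-m_k)/4}{M}-\frac{1}{2M}.
\]
Written relative to the gap midpoint $(s_k+m_k)/M$, this is $(s_k+m_k)/M+(m_{k+1}-m_k)/(4M)$. Since $|m_{k+1}-m_k|\le1$, the boundary lies within $1/(4M)$ of the gap midpoint. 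The two neighbouring demand points are each $1/(2M)$ from the gap midpoint, so the boundary lies strictly inside $(t_{\rm last},t_{\rm first})$. The strictness comes precisely from $1/4<1/2$, and this is where near-balance is used. (Very unequal groups could push the boundary past a demand point.)

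Finally, every member of $G_k$ lies at or left of $t_{\rm last}$, and hence strictly left of the boundary with $G_{k+1}$. Symmetrically, every member lies strictly right of the boundary with $G_{k-1}$. So each agent is strictly nearest its own group's profile. Then $g(t_i)=0$ by Proposition~\ref{prop:gain}, and ties never arise, so the tie convention is irrelevant.

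The cases $K=1$ (no boundaries) and $K=M$ (each profile equals its demand) are covered by the same computation. The main obstacle is only the arithmetic of the midpoint formula, together with being careful that the argument does not depend on how the larger and smaller groups are ordered. It does not, because only adjacent size differences of at most one enter.
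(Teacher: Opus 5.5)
Your proof is correct and takes essentially the same route as the paper's. The paper also shows that the midpoint of adjacent centroids lies $(b-a)/(4M)$ from the gap midpoint, hence within $1/(4M)<1/(2M)$. It then uses the ordering of the centroids so that beating both neighbouring profiles suffices, and concludes $g\equiv 0$ by Proposition~\ref{prop:gain}; your explicit centroid formula just makes that computation concrete.
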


\begin{proof}
Take adjacent groups of sizes $a$ and $b$ with $|a-b|\le1$. The midpoint of
their two centroids is displaced from the midpoint of the data gap between them
by $(b-a)/(4M)$, which is at most $1/(4M)$. The nearest demand point on either
side lies $1/(2M)$ from that gap midpoint. So the profile boundary stays strictly
between the two points, and each of them is closer to its own group's centroid.
The centroids increase in group order, so for a fixed $t$ the distance $|t-r_k|$
falls and then rises in $k$, and beating both neighbouring centroids is enough to
beat all of them. Applying the boundary calculation at every adjacent pair
therefore covers the ordered menu, and the case $K=1$ is trivial. The conclusion
$g\equiv0$ is Proposition~\ref{prop:gain} under Assumption~\ref{as:quad}.
\end{proof}

This is why the construction needs no rebalancing step. The intended memberships
are exactly the memberships self-selection produces, at every integer $K$ from
$1$ to $M$, including the cases where the group sizes are unequal.

\subsection{The entropy equality cases}

For the consecutive centroid construction, the grouping-cost bound is attained.
Equality in the subsequent arithmetic--geometric mean step holds exactly when the
occupied category probabilities are equal. If $K\mid M$ the balanced design has
$p_c=1/K$ throughout, $H(C)=\log_2 K$ and $D=D_K^*$, and
\[
  \frac1{12}\bigl(2^{-2H(C)}-M^{-2}\bigr)
  =\frac1{12}\Bigl(\frac1{K^2}-\frac1{M^2}\Bigr)=D_K^*,
\]
so~\eqref{eq:entlb} holds with equality. Setting the tolerance to $\eps=D_K^*$
therefore forces $H(C)\ge\log_2 K$ on every design meeting it, and the balanced
design attains that value. When $M/K\ge q$ it also meets verification and has
zero declaration gain, which is the sense in which $\log_2 K$ is a minimum over
alternative categorical designs and not the encoding cost of a preselected rule.

Away from the divisor cases the two sides separate. At $M=96$ and $\eps=0.002$
the floor evaluates to $2.68716$ bits, while the seven-category design that
attains the welfare optimum uses $2.80656$ bits, with five groups of $14$ and two
of $13$. The floor is a lower bound and not a claim that some partition attains
it.

Separately, on a deterministic nested construction the coarser category is a
function of the finer, so
\[
  H(C_{j+1})=H(C_j)+H(C_{j+1}\mid C_j)\ge H(C_j),
\]
with strict increase at any split of positive mass. That gives the band of
Theorem~\ref{thm:budget} an ordered entropy trace along the construction. Those
endpoint entropies are construction-relative and are not a bound over alternative
designs.

\subsection{What the declaration message must distinguish}

Fix a deterministic delivered allocation $A=\alpha(T)$ with the menu known to
both parties, and suppose the coordinator must reproduce $A$ exactly from a
per-agent message $Z$ with no other demand-related channel. Then $A$ is a
function of $Z$, so
\[
  I(T;Z)=I(A;Z)+I(T;Z\mid A)\ge H(A),
\]
and sending the allocation identifier attains equality. A fixed-length message
needs at least $\lceil\log_2 L\rceil$ bits for $L$ distinct delivered profiles of
positive probability. If every occupied category has a distinct profile then
$H(A)=H(C)$ and $L=K$. If categories share a profile the allocation identifier
can be coarser, and whether the finer category must be retained for a separate
verification obligation is a different question.

This accounting covers the declaration and allocation channel only. Menu
construction, profile publication, category identifiers attached to telemetry,
aggregate outcome counts and repeated lifecycle messages are outside it.

\subsection{A general finite-population obstruction}

Corollary~\ref{cor:floor} is stated for the uniform grid. A distribution-free
version holds under an explicit separation condition. Let $x_1,\dots,x_M$ be
distinct oracle allocations in $\RR^d$ separated by
$\|x_i-x_j\|\ge\theta>0$ for $i\neq j$, with
equal welfare weights and the curvature lower bound of
Assumption~\ref{as:curv}. For a category with $m$ members and any profile $r$,
\[
  \sum_{i}\|x_i-r\|^2\;\ge\;\frac1m\sum_{i<j}\|x_i-x_j\|^2\;\ge\;\frac{(m-1)\theta^2}{2}.
\]
Summing over $K$ categories and dividing by $M$ gives
$D\ge(\mu\theta^2/4)(1-K/M)$, and if every category needs $q$ contributors,
\begin{equation}
\label{eq:sepfloor}
  D\;\ge\;\frac{\mu\theta^2}{4}\Bigl(1-\frac{\lfloor M/q\rfloor}{M}\Bigr)
  \;\ge\;\frac{\mu\theta^2}{4}\Bigl(1-\frac1q\Bigr),
\end{equation}
which is strictly positive for $q>1$, in any dimension, for arbitrary partitions
and profiles. The premises are substantive. If some agents share an oracle
allocation then $\theta=0$ and~\eqref{eq:sepfloor} says nothing, which is correct,
since identical demands can be grouped and monitored at no allocation cost.

\section{A general-source information floor}
\label{app:floor}
Corollary~\ref{cor:info} needs a source-dependent lower bound on category
entropy. Section~\ref{sec:construction} supplies one for the uniform scalar grid.
This appendix states the general form used there.

\subsection{Finite weighted sources}

Let $X=x(T)$ take finitely many distinct values in $\RR^d$, with $x_i$ carrying
probability $w_i>0$. Define
\[
  \gamma_P=\min_{i<j}\frac{\|x_i-x_j\|}{w_i^{1/d}+w_j^{1/d}},
  \qquad
  A_P=\frac{d}{d+2}\,\gamma_P^2,
  \qquad
  S_P=\sum_i w_i^{1+2/d}.
\]
Then every category rule satisfies
\begin{equation}
\label{eq:floorfinite}
  e\;\ge\;A_P\Bigl(2^{-2H(C)/d}-S_P\Bigr),
\end{equation}
using $e\ge0$ when the right side is negative. Under
Assumption~\ref{as:curv} this converts into a bound on welfare loss through
$D\ge(\mu/2)e$, and inverting gives the floor used in~\eqref{eq:INFO},
\begin{equation}
\label{eq:RP}
  R_P(\eps)=\max\left\{0,\;
  \frac d2\log_2\frac1{S_P+2\eps/(\mu A_P)}\right\}.
\end{equation}

For completeness, we give the packing calculation. Let $v_d$ be the volume of
the unit ball in $\RR^d$. If a measure of mass $p$ has density at most $F$, then
at most $F v_d u^d$ of its mass can lie within distance $u$ of any profile $r$.
With $R=(p/(F v_d))^{1/d}$, this gives
\[
  \int\|y-r\|^2\,d\nu(y)
  \ge\int_0^R 2u\bigl(p-Fv_du^d\bigr)\,du
  =\frac{d}{d+2}(Fv_d)^{-2/d}p^{1+2/d}.
\]

For the proof only, replace each source point $x_i$ by a uniform ball of radius
$\gamma_P w_i^{1/d}$ and retain its probability $w_i$. The balls have disjoint
interiors by the definition of $\gamma_P$. The resulting variable $V$ has
density at most $F=(v_d\gamma_P^d)^{-1}$. Generate the centred displacement
independently of the category conditional on the source point, and retain the
original category and profile. Its mean is zero and its total added squared
error is $A_P S_P$. Therefore
\[
  \EE\|V-r_C\|^2=e+A_PS_P.
\]

For category $c$, the joint measure of $V$ and the event $C=c$ has mass $p_c$
and density at most $F$. Applying the preceding moment bound to each category
and summing gives
\[
  e+A_PS_P\ge A_P\sum_c p_c^{1+2/d}.
\]

Weighted arithmetic--geometric mean yields
\[
  \sum_c p_c^{1+2/d}
  =\sum_c p_c\,p_c^{2/d}
  \ge\prod_c(p_c^{2/d})^{p_c}
  =2^{-2H(C)/d}.
\]

Subtracting $A_P S_P$ proves the stated allocation-error bound. The Welfare
Bound gives $D\ge(\mu/2)e$. Rearranging under $D\le\eps$, together with
$H(C)\ge0$, gives the displayed information floor. The artificial spread is a
proof device and does not change the declaration or allocation mechanism. A
source with only one preferred allocation requires no allocation information and
is treated separately.

\subsection{Bounded-density continuous sources}

For a continuous oracle source with density bounded above by $F$, write $v_d$
for the volume of the unit ball and
\[
  A_F=\frac d{d+2}\,(F v_d)^{-2/d}.
\]
For each category, its joint source measure has mass $p_c$ and density at most
$F$. The moment bound just proved therefore gives
$e\ge A_F\sum_c p_c^{1+2/d}\ge A_F\,2^{-2H(C)/d}$. Applying the Welfare Bound
and rearranging gives the corresponding entropy floor for positive tolerances.
This statement uses a bounded density on the stated Euclidean space and does not
automatically cover a singular source.

\subsection{How the floor is used}

The floor is a converse. It says that any design meeting a welfare tolerance must
carry at least a certain category entropy, whatever partition it uses and
whatever profiles it delivers. Paired with the occupancy ceiling of
Corollary~\ref{cor:occ} it gives the envelope~\eqref{eq:INFO}, and non-overlap of
the two sides certifies that no design meets both targets. Overlap certifies
nothing on its own. An entropy value between the two bounds says nothing about
whether some partition attains it with adequate minimum occupancy.
Section~\ref{sec:construction} is where an attained statement is produced, by
exhibiting the design rather than by inspecting the envelope.

\begin{remark}
Section~\ref{sec:construction} does not rely on~\eqref{eq:floorfinite}. Its floor
is proved directly for the uniform scalar grid, by Lemma~\ref{lem:group} and the
weighted arithmetic--geometric mean step of Corollary~\ref{cor:entropy}, and it
is attained there. The statements in this appendix are the general-source
versions, at the generality of the packing argument, and they do not aim at the
best constant.
\end{remark}

\section{Reproducibility of the numerical illustration}
\label{app:repro}
\subsection{Demand generator}

The synthetic population of Section~\ref{sec:numerical} is generated as follows.
Each run uses a fresh NumPy \texttt{default\_rng(seed)} with seeds 1--10, drawing
50{,}000 sessions from a mixture of five service types with weights:
video streaming 0.35, interactive communications 0.15, IoT telemetry 0.20, web
browsing 0.20, bulk transfer 0.10.

The four-dimensional demand vector for each session is
\[
  \Bigl(\min(\mathrm{tp}/100,\,1),\;\;
  1-\min(\mathrm{lat}/500,\,1),\;\;
  \mathrm{prb},\;\;
  \mathrm{mcs}/3\Bigr),
\]
where throughput (Mbps) is lognormal, latency (ms) is normal clamped at zero,
physical resource block (PRB) utilisation is beta-distributed, and modulation
and coding scheme (MCS) order is a categorical draw over $\{0,1,2,3\}$,
indexing QPSK, 16QAM, 64QAM and 256QAM, all with type-specific parameters.
Latency is inverted so that lower latency corresponds to higher demand. The
normalisation uses fixed known bounds (100\,Mbps, 500\,ms, 3 for MCS), not
data-driven minima.

Each session draws its service type from the mixture weights above, then draws
its four coordinates independently from that type's own distributions. The
generator imposes no further dependence between coordinates, so within a type
the four are independent, and the type mixture is the only source of
correlation between them in the pooled population.

The five type-specific distribution parameters are:

\begin{center}
\begin{tabular}{lccc}
\toprule
Type & Throughput (lognormal) & Latency (normal) & PRB (beta) \\
\midrule
Video streaming   & mean 25, $\sigma$ 0.5  & 50, std 10   & $a=5,\,b=2$ \\
Interactive comms & mean 5,  $\sigma$ 0.6  & 10, std 3    & $a=2,\,b=4$ \\
IoT telemetry     & mean 0.1, $\sigma$ 1.0 & 200, std 50  & $a=1,\,b=8$ \\
Web browsing      & mean 10, $\sigma$ 0.8  & 30, std 8    & $a=2,\,b=3$ \\
Bulk transfer     & mean 50, $\sigma$ 0.4  & 500, std 100 & $a=7,\,b=1.5$ \\
\bottomrule
\end{tabular}
\end{center}

For lognormal demands, the internal $\mu$ is set to
$\ln(\text{mean})-0.5\sigma^2$ so that the stated mean is the target mean of the
distribution.

The MCS profiles are type-specific categorical distributions on the four
modulation orders, in the order QPSK, 16QAM, 64QAM, 256QAM:

\begin{center}
\begin{tabular}{lcccc}
\toprule
 & \multicolumn{4}{c}{MCS probability} \\
\cmidrule(lr){2-5}
Type & QPSK & 16QAM & 64QAM & 256QAM \\
\midrule
Video streaming   & 0.05 & 0.10 & 0.30 & 0.55 \\
Interactive comms & 0.10 & 0.25 & 0.40 & 0.25 \\
IoT telemetry     & 0.60 & 0.25 & 0.10 & 0.05 \\
Web browsing      & 0.10 & 0.20 & 0.40 & 0.30 \\
Bulk transfer     & 0.05 & 0.10 & 0.25 & 0.60 \\
\bottomrule
\end{tabular}
\end{center}

The entries of both tables are the coordinate choices of the synthetic model.
They are not measurements of a deployed network, and nothing here claims that
the mechanism has been validated against one.

\subsection{Category rules}

At $K=3$, the five types are grouped by resource intensity into three semantic
categories: \emph{high throughput} (video streaming, bulk transfer),
\emph{low latency} (interactive communications, web browsing), and
\emph{low throughput} (IoT telemetry). Each category receives the componentwise
mean of its members' demand vectors.

At $K=10$ and $K=30$, categories are fitted by $k$-means
(\texttt{scikit-learn}, 10 restarts, 300 maximum iterations,
$k$-means++ initialisation). The random seed for each fit is drawn
from the run's generator. Each participant receives the fitted cluster
centroid for its assigned label.

\subsection{Reported statistics}

For each run $b$ and granularity, the reported quantities are:
\begin{itemize}
\item $D_b$: mean squared error against the delivered profiles,
  $\frac1N\sum_i\|t_{bi}-r_{c_i}\|^2$.
\item $V_b$: total demand variance,
  $\frac1N\sum_i\|t_{bi}-\bar t_b\|^2$.
\item $D_b/V_b$: the normalised ratio, computed per run.
\item Mean~$g$: mean declaration gain across all $N$ participants in run $b$.
\item Max~$g$: maximum declaration gain across all $N$ participants in run $b$.
\end{itemize}

Across the ten runs, $D$ and mean~$g$ are reported as the arithmetic mean of the
per-run values. The $D/\mathrm{Var}(T)$ column reports the mean and standard
deviation of the per-run ratios. Max~$g$ is the overall maximum across all
participants in all ten runs.

Fitted profiles need not equal the conditional means of their final memberships.
Writing $x=x(T)$, the squared allocation error against the delivered profiles
decomposes as
\begin{equation}
\label{eq:eR}
  \EE\|x-r_{C}\|^2=\EE\|x-\EE[x\mid C]\|^2+\EE\|\EE[x\mid C]-r_C\|^2,
\end{equation}
where the second term vanishes for centroid profiles and not otherwise. The
profile-to-mean offset reported in Section~\ref{sec:numerical} estimates that
second term as the weighted mean of $\|r_c-\bar x_c\|^2$ across categories within
a run, then averaged across runs. All reported allocation errors use the profiles
actually delivered.

Declaration gain is the delivered squared distance minus the minimum squared
distance over the same offered profiles, with ties accepted via a tolerance of
$10^{-12}+10^{-10}\max(\text{current},\text{best})$.

\subsection{Software versions}

The results were produced with Python 3.12, NumPy 2.4.3, and scikit-learn 1.8.0.
The generator and fitting code, including the exact parameter values above, is
available from the author upon request.

\section{A fixed-menu consequence}
\label{app:fixedmenu}
This appendix records two consequences of Proposition~\ref{prop:gain} for a
fixed menu. Neither is needed for Theorems~\ref{thm:welfare},
\ref{thm:aggdom}, \ref{thm:budget} or~\ref{thm:exactband}, and neither is
claimed as a contribution of this paper. They are collected here because they
are the standard reading of the proposition.

\subsection{A fixed-menu optimisation consequence}

Let $D_K^{\mathrm{free}}$ be the infimum of allocation loss over menus with at
most $K$ profiles and measurable assignments. Let $D_K^{\mathrm{IC}}$ be the
infimum over the same class with zero declaration gain imposed.

\begin{corollary}[No welfare price at the optimum]
\label{cor:free}
If the admissible class is closed under reassignment to a utility-maximising
offered profile, then
\begin{equation}
\label{eq:ICfree}
  D_K^{\mathrm{IC}}=D_K^{\mathrm{free}}.
\end{equation}
\end{corollary}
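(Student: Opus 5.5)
The plan is to prove the two inequalities $D_K^{\mathrm{IC}}\ge D_K^{\mathrm{free}}$ and $D_K^{\mathrm{IC}}\le D_K^{\mathrm{free}}$ separately. The first is immediate. Imposing zero declaration gain restricts the admissible class, so the infimum over the smaller class cannot be below the infimum over the larger one.

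For the reverse inequality, I will take an arbitrary admissible pair $(\calM,c)$ with at most $K$ profiles and replace its assignment by a utility-maximising one. Concretely, define $c^\sharp(t)\in\arg\max_{k}U(t,r_k)$ over the same finite menu, with the paper's fixed deterministic tie convention. The menu has finitely many profiles, so the maximum exists. Under that tie rule $c^\sharp$ is measurable, since it is built from finitely many measurable comparisons of $t\mapsto U(t,r_k)$. By the closure hypothesis, $(\calM,c^\sharp)$ is admissible.

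Two things then need checking.
\begin{itemize}
\item \emph{Zero gain.} By construction the designated profile under $c^\sharp$ maximises utility over the menu. Proposition~\ref{prop:gain} then gives $g\equiv 0$, so $(\calM,c^\sharp)$ lies in the IC class.
\item \emph{No increase in loss.} Pointwise, $U(t,r_{c^\sharp(t)})\ge U(t,r_{c(t)})$, so $\ell(t,r_{c^\sharp(t)})\le\ell(t,r_{c(t)})$. Taking expectations, which are finite by the paper's standing assumption, gives $D(\calM,c^\sharp)\le D(\calM,c)$.
\end{itemize}
Hence $D_K^{\mathrm{IC}}\le D(\calM,c)$ for every admissible pair. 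Taking the infimum over all such pairs yields $D_K^{\mathrm{IC}}\le D_K^{\mathrm{free}}$.

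The main subtlety is the bookkeeping on the menu size "at most $K$". Reassignment may leave some profiles unused, and the count of occupied profiles can only fall. I will therefore read the menu cardinality as the number of offered profiles, which is unchanged, so the constraint still holds. Because the argument works with infima, no optimiser needs to exist, and the equality holds even when neither infimum is attained. The other point worth stating explicitly is the measurability of the argmax selection under the deterministic tie rule, which is exactly what the closure hypothesis is meant to guarantee.
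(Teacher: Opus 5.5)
Your proof is correct and follows the paper's argument. The restriction gives $D_K^{\mathrm{free}}\le D_K^{\mathrm{IC}}$; for the reverse inequality you keep any candidate's menu, reassign each type to a utility-maximising offered profile (admissible by the closure hypothesis), and note that this leaves the number of offered profiles unchanged, does not increase loss, and gives zero gain by Proposition~\ref{prop:gain}, so taking infima finishes the argument; your remarks on measurability of the tie-broken argmax and on reading ``at most $K$'' as offered rather than occupied profiles spell out points the paper leaves implicit.
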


\begin{proof}
Restricting the feasible set gives $D_K^{\mathrm{free}}\le D_K^{\mathrm{IC}}$.
Conversely, keep the menu of any unrestricted candidate and assign each type to
its preferred offered profile. Allocation loss does not increase, the number of
offered profiles does not change, and Proposition~\ref{prop:gain} gives zero
gain. Taking infima proves the reverse inequality.
\end{proof}

The equality concerns the stated fixed-menu class. Reassignment can change
category probabilities and occupancies, so it need not preserve an entropy
restriction, a balance requirement, or a monitoring guarantee. The exact
construction of Section~\ref{sec:construction} establishes its allocation and
verification properties on the same memberships.

\subsection{The exact gain identity}

Proposition~\ref{prop:gain} bounds the declaration gain by the allocation loss.
The exact relation is finer, because it separates two defects that the bound
combines.

For a fixed menu $\calM$, write $c_{\calM}(t)\in\arg\min_k \ell(t,r_k)$ for the
best-response assignment on that same menu. Then, pointwise,
\begin{equation}
\label{eq:gainid}
  g(t)=\ell(t,r_{c(t)})-\min_k \ell(t,r_k),
  \qquad
  \EE\,g(T)=D(\calM,c)-D(\calM,c_{\calM}),
\end{equation}
and hence, with $D_K^{\mathrm{free}}$ as in Corollary~\ref{cor:free},
\[
  0\;\le\;\EE\,g(T)\;\le\;D(\calM,c)-D_K^{\mathrm{free}} .
\]
The first equality holds because $\ell(t,r)=U(t,x(t))-U(t,r)$ differs from
$-U(t,r)$ by a term not depending on $r$, so maximising $U(t,\cdot)$ over the
menu is minimising $\ell(t,\cdot)$ over it. The second follows by taking
expectations, and the third from $D(\calM,c_{\calM})\ge D_K^{\mathrm{free}}$.

The reading is that declaration gain is exactly the loss recoverable by fixing
the assignment while leaving the offered profiles alone. Call
$\min_k\ell(t,r_k)$ the coverage loss, the shortfall that no declaration can
avoid because no offered profile fits well, and call the remainder the
assignment loss. Only the assignment loss is a profitable deviation. The bound
$\EE g\le D$ of Proposition~\ref{prop:gain} charges the deviation with both,
and can therefore be loose.

\bibliographystyle{plainnat}
\bibliography{refs}

@article{shannon1948,
  author  = {Shannon, Claude E.},
  title   = {A Mathematical Theory of Communication},
  journal = {Bell System Technical Journal},
  volume  = {27},
  number  = {3},
  pages   = {379--423},
  year    = {1948},
  doi     = {10.1002/j.1538-7305.1948.tb01338.x}
}

@book{berger1971,
  author    = {Berger, Toby},
  title     = {Rate Distortion Theory: A Mathematical Basis for Data
               Compression},
  publisher = {Prentice-Hall},
  year      = {1971}
}

@article{gray_neuhoff1998,
  author  = {Gray, Robert M. and Neuhoff, David L.},
  title   = {Quantization},
  journal = {IEEE Transactions on Information Theory},
  volume  = {44},
  number  = {6},
  pages   = {2325--2383},
  year    = {1998},
  doi     = {10.1109/18.720541}
}

@article{lloyd1982,
  author  = {Lloyd, Stuart P.},
  title   = {Least Squares Quantization in {PCM}},
  journal = {IEEE Transactions on Information Theory},
  volume  = {28},
  number  = {2},
  pages   = {129--137},
  year    = {1982},
  doi     = {10.1109/TIT.1982.1056489}
}

@article{neyman_pearson1933,
  author  = {Neyman, Jerzy and Pearson, Egon S.},
  title   = {On the Problem of the Most Efficient Tests of Statistical
             Hypotheses},
  journal = {Philosophical Transactions of the Royal Society of London,
             Series A},
  volume  = {231},
  pages   = {289--337},
  year    = {1933},
  doi     = {10.1098/rsta.1933.0009}
}

@book{lehmann_romano2005,
  author    = {Lehmann, E. L. and Romano, Joseph P.},
  title     = {Testing Statistical Hypotheses},
  edition   = {3rd},
  publisher = {Springer},
  year      = {2005},
  doi       = {10.1007/0-387-27605-X}
}

@article{blackwell1953,
  author  = {Blackwell, David},
  title   = {Equivalent Comparisons of Experiments},
  journal = {The Annals of Mathematical Statistics},
  volume  = {24},
  number  = {2},
  pages   = {265--272},
  year    = {1953},
  doi     = {10.1214/aoms/1177729032}
}

@inproceedings{khodabakhsh2024,
  author    = {Khodabakhsh, Ali and Pountourakis, Emmanouil and
               Taggart, Samuel},
  title     = {Simple Delegated Choice},
  booktitle = {Proceedings of the 2024 Annual ACM-SIAM Symposium on Discrete
               Algorithms (SODA)},
  pages     = {569--590},
  publisher = {SIAM},
  year      = {2024},
  doi       = {10.1137/1.9781611977912.21}
}

@article{bns2007,
  author  = {Blumrosen, Liad and Nisan, Noam and Segal, Ilya},
  title   = {Auctions with Severely Bounded Communication},
  journal = {Journal of Artificial Intelligence Research},
  volume  = {28},
  pages   = {233--266},
  year    = {2007},
  doi     = {10.1613/jair.2081}
}

@article{shlezinger2019,
  author  = {Shlezinger, Nir and Eldar, Yonina C. and Rodrigues,
             Miguel R. D.},
  title   = {Hardware-Limited Task-Based Quantization},
  journal = {IEEE Transactions on Signal Processing},
  volume  = {67},
  number  = {20},
  pages   = {5223--5238},
  year    = {2019},
  doi     = {10.1109/TSP.2019.2935864}
}

@inproceedings{tishby1999,
  author        = {Tishby, Naftali and Pereira, Fernando C. and Bialek,
                   William},
  title         = {The Information Bottleneck Method},
  booktitle     = {Proc. 37th Annual Allerton Conference on Communication,
                   Control, and Computing},
  pages         = {368--377},
  year          = {1999},
  eprint        = {physics/0004057},
  archivePrefix = {arXiv}
}

\end{document}